\newtheorem{theorem}{Theorem}
\newtheorem{lemma}{Lemma}
    \def\Complex{{\rm\rule[.23ex]{.03em}{1.1ex}\kern-.3em{C}}}
    \newcommand{\be}{\begin{equation}} \newcommand{\ee}{\end{equation}}
    \newcommand{\bea}{\begin{eqnarray}} \newcommand{\eea}{\end{eqnarray}}
    \newcommand{\benum}{\begin{enumerate}} \newcommand{\eenum}{\end{enumerate}}
    \newcommand{\qa}{{\bf a}}
    \newcommand{\qb}{{\bf b}}
    \newcommand{\qe}{{\bf e}}
    \newcommand{\qf}{{\bf f}}
    \newcommand{\qh}{{\bf h}}
    \newcommand{\qn}{{\bf n}}
    \newcommand{\qs}{{\bf s}}
    \newcommand{\qv}{{\bf v}}
    \newcommand{\qw}{{\bf w}}
    \newcommand{\qx}{{\bf x}}
    \newcommand{\qy}{{\bf y}}
    \newcommand{\qz}{{\bf z}}
    \newcommand{\qA}{{\bf A}}
    \newcommand{\qB}{{\bf B}}
    \newcommand{\qC}{{\bf C}}
    \newcommand{\qF}{{\bf F}}
    \newcommand{\qH}{{\bf H}}
    \newcommand{\qI}{{\bf I}}
    \newcommand{\qJ}{{\bf J}}
    \newcommand{\qN}{{\bf N}}
    \newcommand{\qP}{{\bf P}}
    \newcommand{\qQ}{{\bf Q}}
    \newcommand{\qR}{{\bf R}}
    \newcommand{\qS}{{\bf S}}
    \newcommand{\qT}{{\bf T}}
    \newcommand{\qU}{{\bf U}}
    \newcommand{\qW}{{\bf W}}
    \newcommand{\qX}{{\bf X}}
    \newcommand{\qY}{{\bf Y}}
    \newcommand{\qPhi}{{\boldsymbol \Phi}}
     \newcommand{\qphi}{{\boldsymbol \phi}}
    \newcommand{\ty}{{\tilde{y}}}
    \newcommand{\tqN}{{\tilde{\qN}}}
    \newcommand{\tqY}{{\tilde{\qY}}}
    \newcommand{\bbE}{{\mathbb E}}
    \newcommand{\bbC}{{\mathbb C}}
    \newcommand{\calA}{{\mathcal A}}
    \newcommand{\calN}{{\mathcal N}}
    \newcommand{\calS}{{\mathcal S}}
    \newcommand{\calP}{{\mathcal P}}
    \newcommand{\diag}{{\sf diag}}
\begin{document}

\title{On the Downlink Average {Energy }Efficiency of Non-Stationary XL-MIMO}

\author{Jun Zhang,~\IEEEmembership{Senior Member,~IEEE,}
        Jiacheng Lu,~\IEEEmembership{Graduate Student Member,~IEEE,}
        Jingjing Zhang, \\
        Yu Han,~\IEEEmembership{Member,~IEEE,}
        Jue Wang,~\IEEEmembership{Member,~IEEE,}
        Shi Jin,~\IEEEmembership{Senior Member,~IEEE}
\thanks{J. Zhang, J. Lu, and J. J. Zhang are with Jiangsu Key Laboratory of Wireless Communications, Nanjing University of Posts and Telecommunications, Nanjing 210003, China, Email: \{zhangjun, 1020010209, 1020010323\}@njupt.edu.cn.}
\thanks{H. Yu and S. Jin are with the National Mobile Communications Research Laboratory, Southeast University, Nanjing 210096, China, Email: \{hanyu,jinshi\}@seu.edu.cn.}
\thanks{J. Wang is with School of Information Science and Technology, Nantong University, Nantong 226019, China, Email: wangjue@ntu.edu.cn.}
\thanks{This paper was presented in part at the 2022 WCSP \cite{Zhang22WCSP}.}
}

\maketitle
\begin{abstract}
Extra large-scale multiple-input multiple-output (XL-MIMO) is a key technology for future wireless communication systems. This paper considers the effects of visibility region (VR) at the base station (BS) in a non-stationary multi-user XL-MIMO scenario, where only partial antennas can receive users' signal. In time division duplexing (TDD) mode, we first estimate the VR at the BS by detecting the energy of the received signal during uplink training phase. The probabilities of two detection errors are derived and the uplink channel on the detected VR is estimated. In downlink data transmission, to avoid cumbersome Monte-Carlo trials, we derive a deterministic approximate expression for ergodic {average energy efficiency (EE)} with the regularized zero-forcing (RZF) precoding. In frequency division duplexing (FDD) mode, the VR is estimated in uplink training and then the channel information of detected VR is acquired from the feedback channel. In downlink data transmission, the approximation of ergodic average {EE} is also derived with the RZF precoding. Invoking approximate results, we propose an alternate optimization algorithm to design the detection threshold and the pilot length in both TDD and FDD modes. The numerical results reveal the impacts of VR estimation error on ergodic average {EE} and demonstrate the effectiveness of our proposed algorithm.
\end{abstract}
\begin{IEEEkeywords}
Extra large-scale MIMO, visibility region, RZF precoding, average {energy} efficiency.
\end{IEEEkeywords}

\section{Introduction}
Extra large-scale multiple-input multiple-output (XL-MIMO), featured by utilizing of an extremely large antenna array (ELAA) at base station (BS) and serving many users simultaneously, has become a fundamental technology to support ambitious visions of future sixth generation mobile communication systems (6G) \cite{Cui23MCN}. XL-MIMO provides higher array gain, better ability of spatial multiplexing, and remarkable improvement in spectral efficiency (SE), which helps significantly alleviate the short of spectral resources \cite{Chen20WC}. Besides, as compared to massive MIMO, preferable channel properties such as channel hardening and asymptotic orthogonality are better seen in XL-MIMO \cite{Feng22JIoT}.

However, the sharp increase on the array size also incurs new problems.
First, the complexity of channel estimation and signal processing becomes overwhelming as array dimension grows, which poses great challenges to both hardware and software design \cite{Feng23VTM}.
Second, the much larger Rayleigh distance of ELAA results in new channel characteristics \cite{Cui23MCN}. In traditional MIMO systems, the BS-user distance is usually beyond the Rayleigh distance, such that planar wavefront can be assumed when describing the wireless channel \cite{2003Antennas}. However, this is no longer true in XL-MIMO systems, where near-field propagation has to be considered.
Third, the spatial non-stationarity becomes non-negligible since the propagation between ELAA and users can be partially blocked by scatters or obstacles \cite{Jose22TWC,Han22JSTSP}. As a result, a user may merely see only partial antennas instead of the entire array, and these antennas that can be observed by the user are referred to its visibility region (VR) \cite{Carvalho20WC}. The existence of VR may reduce the array gain, on the other hand, it can also help reduce the complexity of precoding and user scheduling \cite{Jose22TWC,Lucas21EUSICPO,Nishimura20CL,Yang20TVT,Souza23TVT,Ali19WC}.

To cope with abovementioned problems, many researches investigate the channel modeling and VR features of XL-MIMO in recent years.
In \cite{Lu22CL}, the authors investigated the performances of maximal-ratio combining (MRC), zero-forcing (ZF), and minimum mean-square error (MMSE) beamforming under spherical wavefront, which provides additional freedom to suppress inter-user interference in the distance domain. In \cite{Lu22Tcom}, the authors further considered the size of physical array elements and established a unified model for different implementation forms of XL-MIMO, including discrete antenna array and continuous surface. The above works were conducted assuming line-of-sight (LoS) propagation between users and the ELAA. However, the propagation paths could be partially blocked by obstacles and thus VR effect occurs. To evaluate the impact of VR, the authors compared the signal-to-interference-plus-noise ratio (SINR) performance of conjugate beamforming (CB) and ZF precoders under both stationary and non-stationary channels in \cite{Ali19WC}, revealing that VR may bring favorable influences in reducing inter-user interferences.
Similarly, the performances of MRC and linear minimum mean squared error (LMMSE) receiver in the presence of VR were analyzed in \cite{Yang20TVT}. Considering a non-stationary propagation channel, the authors leveraged user grouping and plane-wave approximation to reduce the complexity of precoding in XL-MIMO, and developed two novel ZF precoders in \cite{Lucas21EUSICPO}.
Besides, precoding design and several user scheduling schemes were also elaborated by utilizing the independence of different users' VR in \cite{Jose22TWC,Nishimura20CL,Souza23TVT}. In \cite{Jose22TWC}, an VR XL-MIMO (NOVR-XL) protocol was proposed for joint user access and scheduling, which improved the sum-rate by taking advantages of the non-overlapping property of different users' VR. The authors further modified the classical strongest user collision resolution (SUCRe) protocol by exploring overlapping VR in \cite{Nishimura20CL}, and the modified protocol allowed more access attempts in XL-MIMO. Moreover, to ensure the quality-of-service (QoS) requirements of different users, a QoS-aware user scheduling scheme was proposed in \cite{Souza23TVT}, achieving fair coverage for a cell where LoS and non-LoS (NLoS) transmission coexist.

Note that abovementioned works all assumed perfect VR knowledge can be obtained at the BS. Despite VR can reduce the precoding complexity and enhance the efficiency of user scheduling, the channel estimation on VR can be challenging since VR distribution can be random and discontinuous, or successively in a block, or partially overlapped for different users. To deal with the VR estimation issue, a turbo orthogonal approximate message passing (Turbo-OAMP) algorithm was proposed in \cite{Zhu21TSP}, where the structured sparsity of XL-MIMO channel was discussed. By jointly detecting the activity of both users and subarrays, a bilinear Bayesian inference algorithm was proposed in \cite{Hiroki22TWC} to estimate the spatial non-stationary channel pattern. In \cite{Han22JSTSP,Tian23WCL}, the VR detection and user localization were jointly accomplished by sensing the flat region in the accumulating function of different antennas' energy. Moreover, a U-shaped dedicated multilayer perceptron (MLP) network was proposed in \cite{Xiao23WCL} to reconstruct the channel on VR with limited pilot overhead. By utilizing the dependence among users, clusters, and VR at the BS,
the authors proposed a novel VR recognition method in \cite{Liu23TVT}, which achieves satisfactory recognition performance with limited dataset.

Although abovementioned schemes can effectively reduce the complexity of precoding or reconstruct the spatial non-stationary channel with the aid of VR information, however, to the best of our knowledge, the impact of VR estimation error on the performance of XL-MIMO systems has not been discussed. Motivated by this, we consider a spatial non-stationary XL-MIMO channel, where the VRs of several single-antenna users are overlapped. We study the average achievable {energy efficiency (EE)} in the presence of VR detection error and then maximize {EE} by designing VR detection scheme and the length of uplink pilot. Our contributions are summarized as follows:
\begin{itemize}
\item In the uplink training phase, we propose a VR detection method by measuring the energy of received pilot signals, and derive two error probabilities for VR detection, i.e., the false detection and the missed detection probabilities, which are shown to be mainly affected by the detection threshold and the length of uplink pilot. Based on the detected VR, we further investigate the impact of VR detection error on the channel estimation performance, for both time division duplexing (TDD) and frequency division duplexing (FDD) modes.

\item According to the estimated imperfect channel state information (CSI) and adopting regularized zero-forcing (RZF) precoder, we derive the deterministic approximate expression for the ergodic average {EE} under TDD and FDD modes, respectively. Invoking these approximations, we formulate an optimization problem to maximize the average {EE} by designing VR detection threshold and uplink pilot length.

\item We propose an alternate optimization algorithm to jointly design the optimal detection threshold and pilot length for VR detection. To reduce complexity, we further propose heuristic solutions for the VR detection threshold with given pilot length, and their effectiveness is validated by numerical simulations.
\end{itemize}

The rest of this paper is organized as follows: In Section \ref{System Model}, we describe the system model. In Section \ref{es TDD}, we describe the procedure of VR detection and uplink channel estimation in TDD systems. Furthermore, the {EE} maximization problem is formulated. In Section \ref{alternate optimization SE}, we derive a deterministic approximation of the ergodic downlink average {EE} and propose an alternate optimization algorithm to design the detection threshold and the length of uplink pilot. In Section \ref{FDD_cond}, we further provide the deterministic approximation and the maximization of the average {EE} in FDD systems. Section \ref{simulations} reveals the impacts of VR estimation error on average {EE} and demonstrates the effectiveness of our proposed schemes by numerical results. Section \ref{conclusions} concludes our paper.

\emph{Notations}: The bold uppercase $\qX$ and bold lowercase $\qx$ denote matrices and vectors, respectively, and the superscripts in top right corner, i.g., $(\cdot)^{T},(\cdot)^{H},(\cdot)^{-1},(\cdot)^{\frac{1}{2}}$, stand for transpose, conjugate-transpose, inverse, and root of the matrix or vector, respectively. The sign $\bbE\{\cdot\}$ means the expectation result. $[\qX]_{i,j}$ indicates the element joined by the $i$-th row and $j$-th column of $\qX$. ${\sf tr}(\qX)$ denotes the trace of $\qX$ and $|\calS|$ denotes the ensemble of set $\calS$.

\section{System Model}\label{System Model}
We consider an XL-MIMO scenario where one BS is equipped with an $M$-antenna uniform linear array (ULA), serving multiple single-antenna users. As a result of the spatial non-stationarity, only part of the ULA can receive the signal from a user, which is described as VR. Usually, the VR varies for different users and we denote the visible antenna indices set of the $k$-th user by $\calA_{k} \subset \{1,2,\ldots,M\}$. The visible antennas in VR can be much fewer than $M$ in the whole array, which helps reduce the complexity of precoding significantly by grouping users properly \cite{Yang20TVT}. Since we assume that the BS has no prior knowledge on users' VR but acquires users' locations, we can group users according to their locations, due to the fact that adjacent users share the same scattering clusters and thus leads to the same VR at the BS \cite{Liu23TVT,Liu12WC}. Furthermore, for different non-overlapping VRs, there is no interference between different groups \cite{Ali19WC}. Thus, BS can communicate with different groups with their own VR. In this paper, we focus on one group where $K$ ($K\ll M$) users share the same VR to investigate the influences of VR knowledge on the ergodic sum-rate and average {EE}, i.e., we assume $\calA_{k}=\calA$ and $|\calA|=L$ for the $K$ users within one group, which is shown in Fig. \ref{System}. Note that the VR length of different user groups can be different.

\begin{figure}
\includegraphics[width=0.35\textwidth]{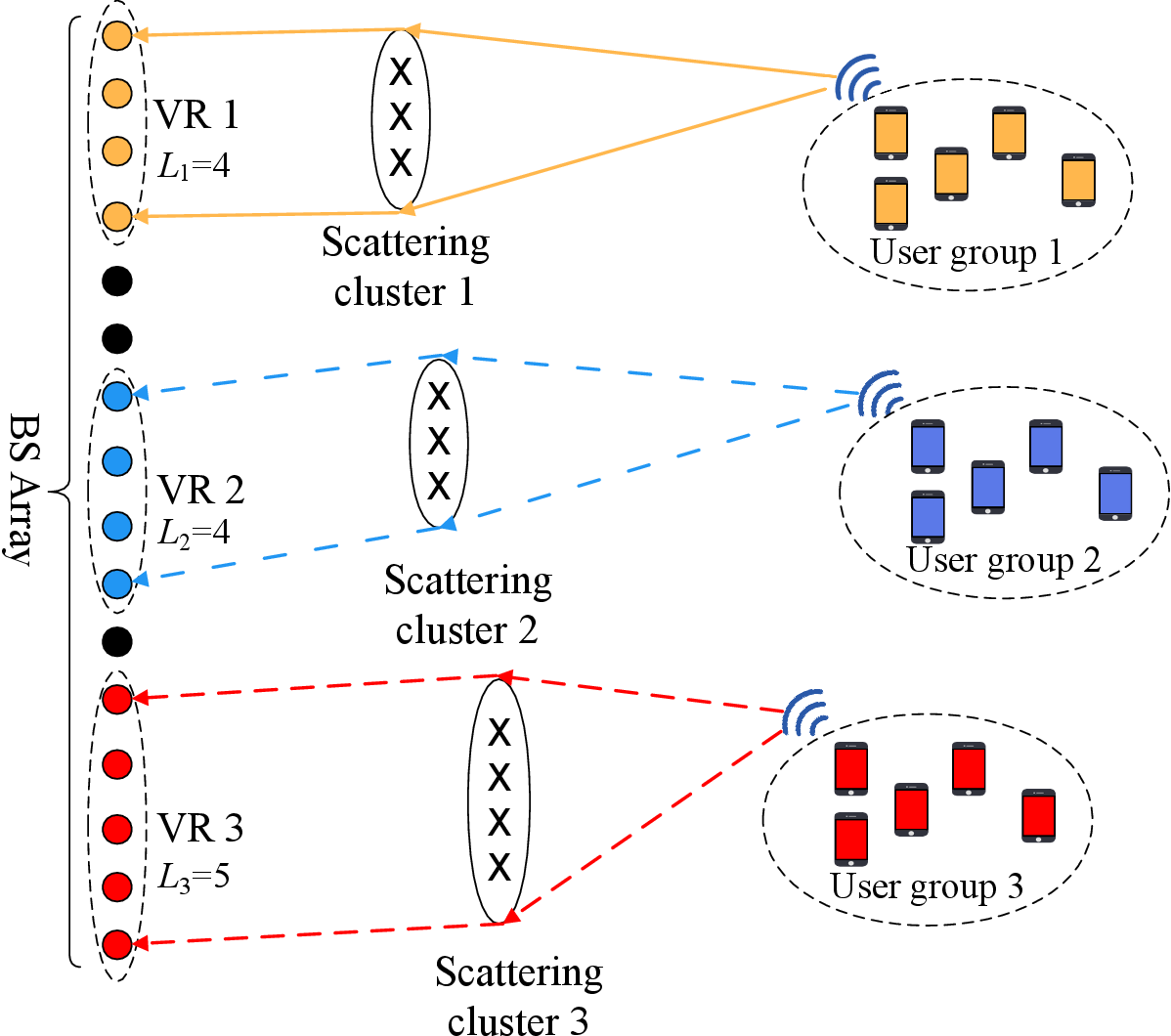}
\centering
\caption{{The spatially non-stationary XL-MIMO system.}} \label{System}
\end{figure}

The channel between the BS and the $k$-th user is modeled as\footnote{For the Rician model with line of sight (LoS) component, the VR is composed of two parts, i.e., the VR of LoS component and the VR of non-LoS component, which are different due to different propagation paths. The former can be precisely detected with the methods proposed in \cite{Tian23WCL,Han22JSTSP} since the LoS component is deterministic. Then the VR can be further detected from those non-LoS antennas with our proposed energy detector, and the probability analysis is similar to that in Section III. For simplicity, we consider only non-LoS component here to better reveal the impact of imperfect VR detection.}
\begin{align}
\qh_{k}^{\sf full}={\qR_{k}^{\sf full}}^{\frac{1}{2}}\qz_{k}^{\sf full},
\end{align}
where $\qz_{k}^{\sf full}\in\mathbb{C}^{M\times 1}$ denotes the fast fading channel vector with respect to $\mathcal{CN}(0,\frac{1}{M}\qI)$, the diagonal matrix $\qR_{k}^{\sf full}=\beta_k{\sf diag}\{\delta_{1,k},\delta_{2,k},\ldots,\delta_{M,k}\}$, $\delta_{m,k}$ is a binary indicator defined by
\begin{align}
\delta_{m,k}=&\left\{                     \begin{array}{ll}
                         0,m\notin\calA_k; \\
                         1,m\in\calA_k,
                     \end{array}
                        \right.\label{binary}
\end{align}
and $\beta_k$ is the large-scale fading known a priori to the users, which is defined as
\begin{align}
\beta_k=\beta_{0}d_k^{-\alpha},
\end{align}
where $\beta_{0}$ is the path loss at the reference distance, $d_k$ is the distance from the $k$-th user to the BS, and $\alpha$ is the path loss factor.

Many existing works assume that VR knowledge (i.e., $\calA_k$) is perfectly known in the BS. It is not clear how the system performance is impacted by imperfect VR knowledge. Therefore, we will reveal the impact of VR estimation error in the next section.

\section{Achievable Energy Efficiency of TDD Systems}\label{es TDD}

In this section, we first detect VR by uplink training before we perform channel estimation and analyze the achievable ergodic downlink {EE} by considering VR estimation error and linear precoding. Furthermore, we formulate the maximization problem for ergodic average {EE}, where detection threshold and uplink pilot length are joint optimized.

\subsection{Uplink VR Estimation and Probability Analysis}\label{VRest}
\begin{figure}[htb]
\includegraphics[width=0.45\textwidth]{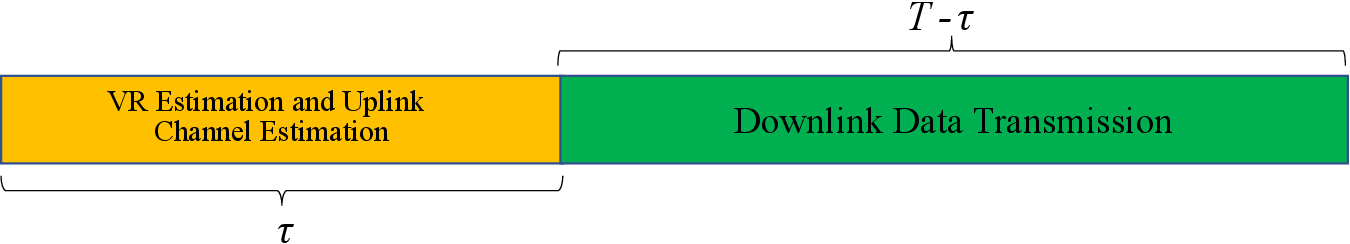}
\centering
\caption{The communication procedure of TDD systems.} \label{TDDpilot}
\end{figure}

In TDD systems, the channel coherence time is divided into two stages for uplink training and downlink data transmission as illustrated in Fig. \ref{TDDpilot}. {The length of available coherence time block is $T$ symbols}. During the uplink training stage, $K$ orthogonal pilot sequences are assigned to users\footnote{ We assume that the orthogonal pilot sequences are assigned per VR user group, and the same pilot sequences are reused among the different user groups since the VR of different groups are considered to be non-overlapping. For the scenario where $K>\tau$, we may assign those additional users with different time interval and reuse the pilot sequences in different intervals.} and satisfy $\qPhi^H \qPhi = \tau M\qI$, where $\qPhi = [\qphi_1,\qphi_2,\ldots,\qphi_K] \in \bbC^{\tau \times K}$ is the pilot matrix and $\tau$ is the pilot length. Thus, the received training signal $\qY_{\sf Tr} \in \bbC^{M\times \tau}$ in the BS is given by
\begin{equation}\label{eq:the received training signal}
 \qY_{\sf Tr} = \qH^{\sf full} \qP_{\sf Tr}^{\frac{1}{2}} \qPhi^H + \qN_{\sf Tr},
\end{equation}
where $\qH^{\sf full} = [\qh_1^{\sf full},\qh_2^{\sf full},\ldots,\qh_K^{\sf full}] \in \bbC^{M\times K}$ is the uplink channel, $\qP_{\sf Tr} = \diag\{p_{{\sf Tr},1},p_{{\sf Tr},2}, \ldots, p_{{\sf Tr},K}\}$ denotes the power of pilot signal from all users, and $\qN_{\sf Tr} \in \bbC^{M\times \tau}$ is the noise matrix whose entries are independent and identically distributed (i.i.d.) with respect to $\mathcal{CN}(0,\sigma_{\sf Tr}^2)$.

During the uplink training stage, the received signal correlated with $\qPhi$ at the BS can be given by
\begin{equation}\label{eq:the received training signal with the pilot sequences}
 \tqY_{\sf Tr} = \tau\sqrt{M} \qH^{\sf full} \qP_{\sf Tr}^{\frac{1}{2}} + \tqN_{\sf Tr},
\end{equation}
where $\tqY_{\sf Tr} = \frac{1}{\sqrt{M}}\qY_{\sf Tr} \qPhi \in \bbC^{M\times K}$ and $\tqN_{\sf Tr} = \frac{1}{\sqrt{M}}\qN_{\sf Tr}\qPhi \in \bbC^{M\times K}$.

From \eqref{eq:the received training signal with the pilot sequences}, the VR $\calA_k$ for different users can be detected respectively by judging the energy strength of observed signal $\tqY_{\sf Tr}$. We denote the ($m$,$k$)-th entry of matrix $\tqY_{\sf Tr}$ by $\ty_{m,k}^{\sf Tr}$ and it can be easily observed that $\ty_{m,k}^{\sf Tr}$ is also a complex Gaussian variable with a probability density function (PDF) given as
\begin{align}
f(\ty_{m,k}^{\sf Tr} | \delta_{m,k}) =& \frac{1}{\pi(\tau^2 p_{{\sf Tr},k} \beta_k \delta_{m,k} + \tau \sigma_{\sf Tr}^2)} \nonumber \\
 &\times \exp \left\{- \frac{|\ty_{m,k}^{\sf Tr}|^2}{\tau^2 p_{{\sf Tr},k} \beta_k \delta_{m,k} + \tau \sigma_{\sf Tr}^2}\right\}. \label{eq:ty PDF}
\end{align}
That is
\begin{equation}\label{eq:ty distributed}
 \ty_{m,k}^{\sf Tr} \sim \left\{\begin{aligned}
& \mathcal{CN}(0,\tau^2 p_{{\sf Tr},k} \beta_k + \tau \sigma_{\sf Tr}^2), & &  \delta_{m,k} = 1; \\
& \mathcal{CN}(0,\tau \sigma_{\sf Tr}^2), & & \delta_{m,k} = 0.
\end{aligned}
\right.
\end{equation}

The VR indicator $\delta_{m,k}$ can not be detected directly from $\ty_{m,k}^{\sf Tr}$ since the deterministic fast fading channel knowledge is difficult to obtain in XL-MIMO. Therefore, we adopt the following energy detector described by
\begin{equation}\label{eq:energy detector}
 \hat{\delta}_{m,k} =\left\{\begin{aligned}
& 1, & & |\ty_{m,k}^{\sf Tr}|^2 > \zeta_0; \\
& 0, & & |\ty_{m,k}^{\sf Tr}|^2 \leq \zeta_0,
\end{aligned}
\right.
\end{equation}
where $\zeta_0$ denotes the detection threshold.

The energy detector \eqref{eq:energy detector} will incur two error events for VR detection: 1) false detection, i.e., $\hat{\delta}_{m,k} = 1$ when $\delta_{m,k} = 0$; 2) missed detection, i.e., $\hat{\delta}_{m,k} = 0$ when $\delta_{m,k} = 1$, where the probabilities of the two conditions can be calculated as follows, respectively,
\begin{align}
 P_{10,k} = &{\sf Pr}(\hat{\delta}_{m,k}=1|\delta_{m,k}=0)= \exp\left\{-\frac{\zeta_0}{\tau \sigma_{\sf Tr}^2}\right\},  \label{eq:Pr10}\\
 P_{01,k} = & {\sf Pr}(\hat{\delta}_{m,k} = 0|\delta_{m,k}=1) \nonumber \\
          = &1- \exp\left\{-\frac{\zeta_0}{\tau^2 p_{{\sf Tr},k} \beta_k + \tau \sigma_{\sf Tr}^2}\right\}.  \label{eq:Pr01}
\end{align}

For analytic tractability, we let $p_{{\sf Tr},1} \beta_1 = \ldots = p_{{\sf Tr},K} \beta_K$ in \eqref{eq:Pr01} through pilot power compensation for different users by efficient calibrations \cite{Bjo15TWC}. Let $p_{{\sf Tr},1} \beta_1=g$ and the probabilities of two error events can be rewritten as
\begin{align}
 P_{10} = & {\sf Pr}(\hat{\delta}_{m,k} = 1|\delta_{m,k}=0) = \exp\left\{-\frac{\zeta_0}{\tau\sigma_{\sf Tr}^2}\right\},  \label{eq:Pr10-2}  \\
 P_{01} = & {\sf Pr}(\hat{\delta}_{m,k} = 0|\delta_{m,k}=1) \nonumber \\
        = & 1- \exp\left\{-\frac{\zeta_0}{\tau^2 g + \tau \sigma_{\sf Tr}^2}\right\},  \label{eq:Pr01-2}
\end{align}
respectively.

Furthermore, the detected VR set of $\calA$ is denoted as $\hat{\calA}$ and it satisfies $|\hat{\calA}| = I+J$, where $I\ (0 \leq I \leq L)$ is assumed to be the number of correctly estimated antennas (i.e., $|\hat{\calA} \cap \calA| = I$) and $J\ (0\leq J \leq M - L)$ is assumed to be the number of wrongly estimated antennas.
Thus, the probability of $\hat{\calA}$ with arbitrary $\{I,J\}$ can be calculated as
\begin{align}
 P_{I,J} = & {\sf Pr}(\hat{\calA}|\calA) \nonumber \\
         = &  \binom{L}{I}\binom{M-L}{J}P_{01}^{L-I} (1-P_{01})^{I} P_{10}^{J} (1-P_{10})^{M-L-J}, \label{eq:PrIJ}
\end{align}
where $I=0,1,2,\ldots,L$ and $J=0,1,2,\ldots,M-L$. Notice that $\hat{\calA} = \calA$ if $I=L$ and $J=0$, which suggests that the VR estimation is exactly correct.

\subsection{Uplink Channel Estimation Based on $\hat{\calA}$}
With detected VR, we can perform uplink channel estimation only on $\hat{\calA}$ rather than the whole array, which helps significantly reduce the computation complexity. We denote the estimated channel vector between the $k$-th user and the BS on $\hat{\calA}$ as $\hat{\qh}_{\hat{\calA},k}\in\mathbb{C}^{(I+J)\times 1}$, and similarly the real channel vector is denoted as $\qh_{\hat{\calA},k}\in\mathbb{C}^{(I+J)\times 1}=[\qh^{\sf full}_k]_{\hat{\calA}}$. Since there are $J$ wrongly estimated antennas in set $\hat{\calA}$, the real channel $\qh_{\hat{\calA},k}\in\mathbb{C}^{(I+J)\times 1}$ has $J$ zero elements.

If the least-square (LS) method is employed for uplink channel estimation, we have the estimated channel by
\begin{align}
\hat{\qH}_{\hat{\calA},\sf LS}=&\qH_{\hat{\calA}}+\frac{1}{M\tau}[\mathbf{N}_{\sf Tr}]_{\hat{\calA}}\qPhi\qP_{\sf Tr}^{-\frac{1}{2}} \nonumber \\
       =&[\hat{\qh}_{\hat{\calA},1},\ldots,\hat{\qh}_{\hat{\calA},K}],
\end{align}
where $\hat{\qh}_{\hat{\calA},k}=\qh_{\hat{\calA},k}+\qe_{\hat{\calA},k}$, and $\qe_{\hat{\calA},k}=\frac{1}{\sqrt{p_{{\sf Tr},k}}M\tau}[\mathbf{N}_{\sf Tr}]_{\hat{\calA}}\qphi_{k}$ is the equivalent estimation error with respect to $\mathcal{CN}(\mathbf{0},\frac{\sigma^2_{\sf Tr}}{p_{{\sf Tr},k}M\tau}\qI_{I+J})$.

\subsection{Achievable Downlink {Energy Efficiency}}
In downlink transmission stage, the BS transmits the user signal with the $(I+J)$ antennas in detected VR $\hat{\calA}$ from the $(T-\tau+1)$-th to $T$-th time slot. The received signal at the $k$-th user can be expressed as
\begin{align}
y_{k}=\qh^{H}_{\hat{\calA},k}\qx+n_{k},
\label{signal}
\end{align}
where $n_{k}$ is the additive white Gaussian noise with respect to $\mathcal{CN}(0,\sigma^2_{k})$ and the transmitted signal $\qx\in\mathbb{C}^{(I+J)\times 1}$ can be given by
\begin{align}
\qx&=\qW\qP_{\sf DL}^{\frac{1}{2}}\qs =\sum_{k=1}^{K}\sqrt{p_{{\sf DL},k}}\qw_{k}s_{k},\label{xsignal}
\end{align}
where $\qP_{\sf DL}={\sf diag} \{p_{{\sf DL},1}, \ldots, p_{{\sf DL},K}\}$, $\qW=[\qw_{1},\ldots,\qw_{K}]$, $\qs^{T}=\{s_1,\ldots,s_K\}$, $p_{{\sf DL},k}, s_{k}$, and $\qw_{k}\in\mathbb{C}^{(I+J)\times 1}$ are the power scaling factor, i.i.d. user symbol with zero mean and unit variance, and the precoding vector for the $k$-th user, respectively.
{The transmitted signal follows the power constraint given below.}
\begin{align}\label{consP}
\mathbb{E}\{\qx^{H}\qx\}&={\sf tr}(\qW^{H}\qW\qP_{\sf DL}) =MP_{T},
\end{align}
where $P_T$ is the total transmit power in the BS and the multiplication of $M$ is due to the normalization for channel realization $\qh$.

In this paper, we adopt the RZF precoding \cite{Joham02ISSSTA} for downlink transmission. For ease of notation, we simply refer to $\hat{\qH}_{\hat{\calA},{\sf LS}}$ as $\hat{\qH}_{\hat{\calA}}=[\hat{\qh}_{\hat{\calA},1},\ldots,\hat{\qh}_{\hat{\calA},K}]$ hereafter in this paper. Then the RZF precoding matrix can be written as
\begin{align}\label{RZF}
\qW = \alpha(\hat{\qH}_{\hat{\calA}}\hat{\qH}_{\hat{\calA}}^{H} + \xi\qI_{I+J})^{-1}\hat{\qH}_{\hat{\calA}} = \alpha\qA^{-1}\hat{\qH}_{\hat{\calA}},
\end{align}
where $\alpha>0$ is the power normalization scalar and $\xi>0$ is the regularization scalar. Note that when $\xi\rightarrow0$ and $\xi\rightarrow\infty$, the RZF precoding degrades into ZF precoding and CB precoding, respectively. To satisfy the transmit power constraint \eqref{consP}, $\alpha>0$ is set as
\begin{align}\label{alpha}
\alpha=\sqrt{\frac{MP_{T}}{\mathbb{E}\{{\sf tr}(\qP_{\sf DL}\hat{\qH}_{\hat{\calA}}^{H}\qA^{-2}\hat{\qH}_{\hat{\calA}})\}}}.
\end{align}

With the structure of \eqref{xsignal}, the received signal can be rewritten as
\begin{align}
y_{k}=\sqrt{p_{{\sf DL},k}}\qh_{\hat{\calA},k}^{H}\qw_ks_k+\sum_{j\neq k}^{K}\sqrt{p_{{\sf DL},j}}\qh_{\hat{\calA},k}^{H}\qw_js_j+n_k.
\end{align}
Then, the signal to interference plus noise ratio (SINR) in the $k$-th user can be given by
\begin{align}\label{ergSINR}
\gamma_{\hat{\calA},k}=\frac{p_{{\sf DL},k}|\qh_{\hat{\calA},k}^{H}\qw_{k}|^2}{\sum_{j\neq k}^{K}p_{{\sf DL},j}|\qh_{\hat{\calA},k}^{H}\qw_{j}|^2+\sigma^{2}_{k}}.
\end{align}
Therefore, the ergodic downlink sum-rate can be given by
\begin{align}\label{ergR}
R^{\sf sum}_{I,J}=\sum_{k=1}^{K}\bbE\{\log(1+\gamma_{\hat{\calA},k})\}.
\end{align}
Notice that the VR estimation result $\hat{\calA}$ can lead to significant difference on $R^{\sf sum}_{I,J}$, where $\{I,J\}$ varies with probability $P_{I,J}$. Note that $P_{I,J}$ is the same for different users since they share the same VR and the VR detection is performed for all users within this group. {Thus, the ergodic average EE can be expressed as
\begin{align}\label{ergSE}
{\sf EE}_{\sf TDD}=\frac{W(T-\tau)}{T}\sum_{I=0}^{L}\sum_{J=0}^{M-L}P_{I,J}\frac{R_{I,J}^{\sf sum}}{P_{\sf cir}}.
\end{align}
where $P_{\sf cir}$ is the consumed power for downlink transmission and $W$ is the bandwidth. $P_{\sf cir}$ can be expressed as \cite{Saba18TWC}
\begin{align}
P_{\sf cir}=\varsigma^{-1}MP_T+2P_{\sf syn}+(I+J)P_{\sf CT}+KP_{\sf CR},
\end{align}
where $\varsigma$ is the efficiency of power amplifier, $P_{\sf syn}$ is the power of frequency synthesizer, $P_{\sf CT}$ and $P_{\sf CR}$ are the circuit power consumed by each RF chain at the transmitter and receiver, respectively.
}

{\emph{Remark 1:} Apart from the fully overlapping VR condition we consider here, it can be very interesting to explore the imperfect VR detection under the partially overlapping VR condition. We may start from a specific and easy case, where $M=16, K=4$ and different $L$ for different user (denoted by $L_{1},L_{2},L_{3},L_{4}$, respectively). The specific true VR distribution is given in Fig. \ref{sp_case}.
\begin{figure}[htb]
\includegraphics[width=0.4\textwidth]{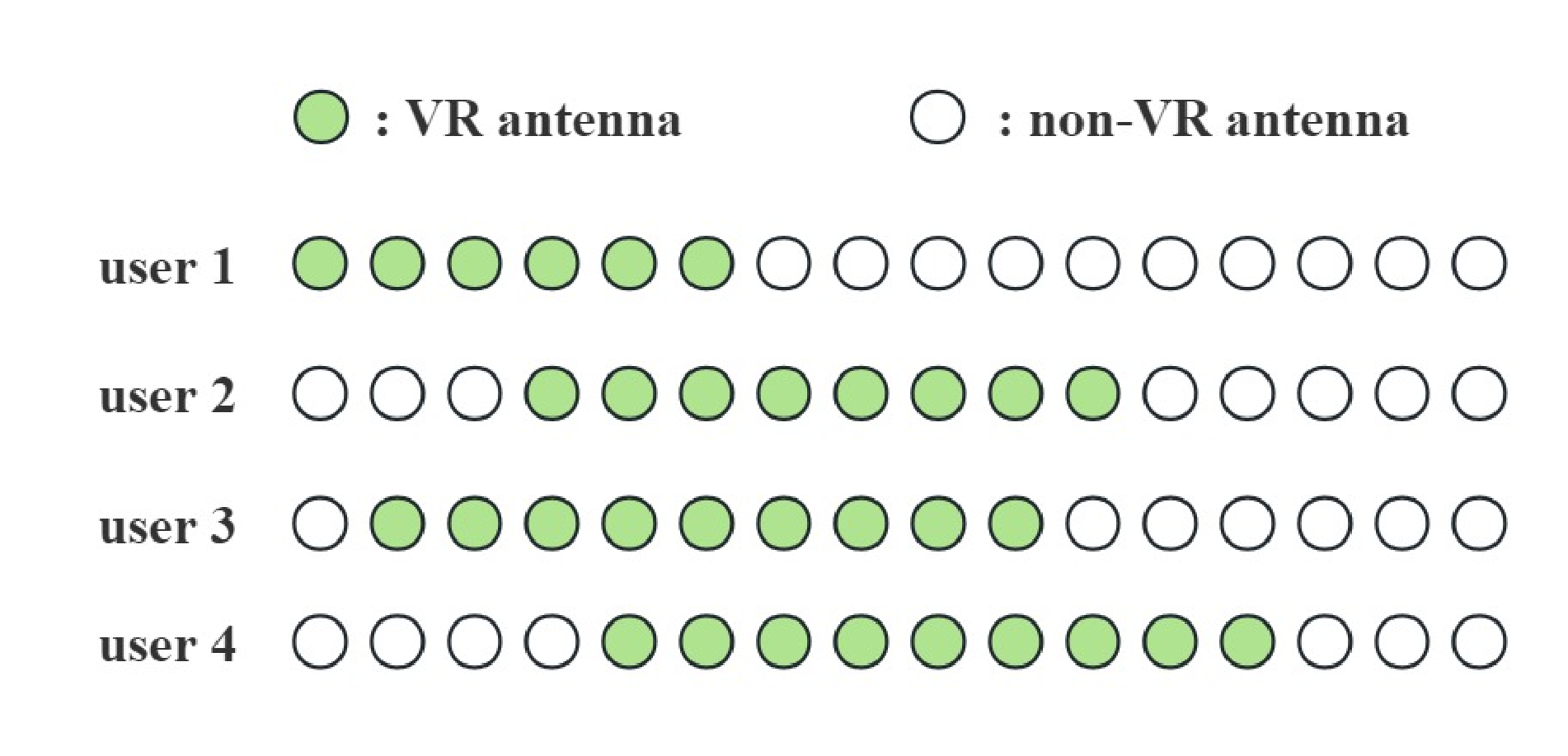}
\centering
\caption{A specific case of partially overlapping VR with $M=16$ and $K=4$.} \label{sp_case}
\end{figure}

Note that the VRs of different users must be detected independently since their VRs are not the same. Therefore, it is necessary to enumerate and calculate the probability for all possible results since average EE is not only concerned with the number of wrongly/correctly detected antennas. By assuming power compensation and i.i.d. channel for different users, the VR detection on each antenna for a single user can be either correct or wrong, which can be represented by 1 or 0, respectively. Thus we can use $P_{0000,m}$ to represent the probability of the event that ``the VR detections on the $m$-th antenna are totally wrong for all 4 users", and $P_{0001,m}$ to represent the probability of the event that ``the VR detections on the $m$-th antenna are wrong for user 1,2,3 but correct for user 4", and so on. Assuming ``0000" and ``0001" being binary codes, we redefine $P_{0000,m}, P_{0001,m}, P_{0010,m}, \ldots, P_{1111,m}$ by $P_{0,m}, P_{1,m}, P_{2,m}, \ldots, P_{15,m}$, respectively. Obviously, there are $2^K=2^4=16$ possible results of VR detection for each antenna. Specifically, considering the first and the second antenna from the left, we have following probabilities
\begin{align}
P_{b_1}=&P_{01}^{1-b_1^{(1)}}(1-P_{01})^{b_1^{(1)}}P_{10}^{3-b_1^{(2)}-b_1^{(3)}-b_1^{(4)}}\nonumber\\
&\times(1-P_{10})^{b_1^{(2)}+b_1^{(3)}+b_1^{(4)}},\label{P_b1}\\
P_{b_2}=&P_{01}^{2-b_2^{(1)}-b_2^{(3)}}(1-P_{01})^{b_2^{(1)}+b_2^{(3)}}\nonumber\\
&\times P_{10}^{2-b_2^{(2)}-b_2^{(4)}}(1-P_{10})^{b_2^{(2)}+b_2^{(4)}},\label{P_b2}
\end{align}
where $b_1$ and $b_2$ are integers between 0 and 15, indicating one VR detection result for the first and second antenna, respectively, and $b^{(i)}$ denotes the $i$-th bit of the corresponding binary code for $b$ (counting from the left). From \eqref{P_b1} and \eqref{P_b2}, we can see that the probabilities of VR detection for different antennas are concerned with the true VR distribution, and the expression becomes more and more complicate as $K$ grows. Based on \eqref{P_b1} and \eqref{P_b2}, we have the probability of any possible VR detection result by $\prod_{m=1}^{M}P_{b_m}$, and the average EE under the partially overlapping VR condition can be expressed as
\begin{align}
{\sf EE}_{\sf partVR}=\frac{W(T-\tau)}{T}\sum_{b_1=0}^{2^K-1}P_{b_1}\dots\sum_{b_M=0}^{2^K-1}P_{b_M}\frac{R_{b_1,\ldots,b_M}^{\sf sum}}{P_{\sf cir}},
\end{align}
where $R_{b_1,\ldots,b_M}^{\sf sum}$ is the sum-rate under VR detection $\{b_1,\ldots,b_M\}$. Obviously, there are $2^{KM}=2^{64}$ possible VR sets in total. Unfortunately, it is difficult to find a common feature, e.g., $\{I,J\}$ and $P_{I,J}$ we use in \eqref{eq:PrIJ}, that helps divide massive possible $\hat{\calA}$ into a few categories. Furthermore, to obtain better performance and lower complexity, it is necessary to elaborate the precoder according to the detected VR. Based on above analysis, the average performance with imperfect VR detection under the partially overlapping VR scenario remains an open problem.}


\subsection{Problem Formulation}
Our target is to maximize the ergodic average {EE} by optimizing the detection threshold $\zeta_0$ and uplink pilot length $\tau$. Therefore, the problem can be formulated as
\begin{align}
(\mathcal{P}1)\ \max_{\zeta_0>0,\tau=K,\ldots,T} {\sf EE}_{\sf TDD}.
\end{align}
Problem $(\mathcal{P}1)$ is difficult to solve due to the ergodic objective function, which requires massive channel realizations to operate Monte-Carlo averaging. To avoid the high complexity, we can utilize the deterministic approximations for \eqref{ergSE} to measure the objective function of problem $(\mathcal{P}1)$, thus $\{\zeta_0, \tau\}$ can be determined with much lower complexity.

\section{Optimization of Detection Threshold and Uplink Pilot Length }\label{alternate optimization SE}

In this section, a deterministic approximate expression for ergodic {EE} is derived, which allows measuring performances without performing massive Monte-Carlo trials. Based on the deterministic approximate expression for \eqref{ergSE}, we turn to solve an equivalent problem without performing Monte-Carlo averaging, and propose an alternate optimization algorithm to design $\zeta_0$ and $\tau$.

\subsection{Deterministic Approximate Results for \eqref{ergSE} }

The ergodic SINR makes it difficult to analyze achievable {EE} in finite-dimensions system. Therefore, we present a deterministic approximation for { EE} by considering the large-system assumption, where $M$ and $K$ both grow to infinity at fixed ratio $M/K$ that $0<\lim\inf M/K\leq\lim\sup M/K<\infty$ \cite{Jun13TWC}. We use $\mathcal{N}\rightarrow\infty$ to represent the above large-system assumption for brevity. The approximation is given in Theorem 1 as follows
\begin{theorem}\label{Theorem1}
As $\mathcal{N}\rightarrow\infty$, we have $\gamma_{\hat{\calA},k}\stackrel{\sf a.s.}{\longrightarrow}\bar{\gamma}_{\hat{\calA},k}$, where $\bar{\gamma}_{\hat{\calA},k}$ is given by
\begin{align}
\bar{\gamma}_{\hat{\calA},k}=\frac{\frac{1}{M} p_{{\sf DL},k}\mu_{k}^2}{\lambda_k(1+a_k\bar{\mu})^2+\frac{1}{M}a_k\bar{\lambda}_k\mu_{k}^2+\frac{\bar{\lambda}\sigma_{k}^2}{P_T}(1+a_k\bar{\mu}+\frac{1}{M}\mu_k)^2},
\end{align}
where $a_{k}=\frac{\sigma^2_{\sf Tr}}{M\tau p_{{\sf Tr},k}}$, and $\mu_k$, $\lambda_k$, $\bar{\mu}$, $\bar{\lambda}_{k}$, $\bar{\lambda}$ are respectively given by
\begin{align}
\mu_{k}&=\frac{I\beta_{k}}{\xi+\frac{\eta}{M}(1+\frac{\sigma_{\sf Tr}^2}{g\tau})},\label{mu_k}\\
\lambda_{k}&=\sum_{i\neq k}^{K}p_{{\sf DL},i}\frac{e_{i,k}'}{(1+e_i)^2},\label{lambda_k}\\
\bar{\mu}&=\frac{I}{\xi+\frac{\eta}{M}(1+\frac{\sigma_{\sf Tr}^2}{g\tau})}+\frac{J}{\xi+\frac{\sigma^2_{\sf Tr}\eta}{M\tau g}},\label{bar_mu}\\
\bar{\lambda}_{k}&=\sum_{i\neq k}^{K}p_{{\sf DL},i}\frac{e_{i}'}{(1+e_i)^2},\label{bar_lambda_k}
\end{align}
\begin{align}
\bar{\lambda}=\sum_{i=1}^{K}p_{{\sf DL},i}\frac{e_{i}'}{(1+e_i)^2}\label{bar_lambda}.
\end{align}
The parameters $\eta=\sum_{i=1}^{K}\frac{\beta_i}{1+e_i}$, $e_{i,k}'=\Big[(\qI_{K}-\qJ)^{-1}\qv_{k}\Big]_i$, $e_{i}'=\Big[(\qI_{K}-\qJ)^{-1}\qv'\Big]_i$, and
\begin{align}
e_i=\beta_i\Big(\frac{I(1+\frac{\sigma_{\sf Tr}^2}{g\tau})}{M\xi+\eta(1+\frac{\sigma_{\sf Tr}^2}{g\tau})}+\frac{J}{M\xi\frac{g\tau}{\sigma^2_{\sf Tr}}+\eta}\Big),\label{e}
\end{align}
where
\begin{align}
[\qJ]_{i,j}&=\frac{\beta_{i}\beta_{j}}{(1+e_j)^2}\Big(\frac{I(1+\frac{\sigma_{\sf Tr}^2}{g\tau})^2}{(M\xi+\eta(1+\frac{\sigma_{\sf Tr}^2}{g\tau}))^2}+\frac{J}{(M\xi\frac{g\tau}{\sigma^2_{\sf Tr}}+\eta)^2}\Big),\label{qJ}\\
[\qv_{k}]_i&=(1+\frac{\sigma_{\sf Tr}^2}{g\tau})\frac{I\beta_i\beta_k}{M(\frac{\eta}{M}+\frac{\eta\sigma_{\sf Tr}^2}{M g\tau}+\xi)^2},\label{qv}\\
[\qv']_{i}&=\frac{\beta_{i}}{M}\Big(\frac{I(1+\frac{\sigma_{\sf Tr}^2}{g\tau})}{(\frac{\eta}{M}+\frac{\sigma_{\sf Tr}^2\eta}{Mg\tau}+\xi)^2}+\frac{J\frac{\sigma_{\sf Tr}^2}{g\tau}}{(\xi+\frac{\sigma_{\sf Tr}^2\eta}{Mg\tau})^2}\Big).\label{qv'}
\end{align}
\end{theorem}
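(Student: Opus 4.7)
\proof

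The plan is to derive the deterministic equivalent via the standard large-dimensional random matrix theory machinery for regularized inverses of Gram matrices with non-identity covariance, in the spirit of Hoydis-Couillet-Debbah and Wagner-Couillet-Debbah-Slock. The crucial observation is that, on the detected VR $\hat{\calA}$, the estimated channel $\hat{\qh}_{\hat{\calA},k}=\qh_{\hat{\calA},k}+\qe_{\hat{\calA},k}$ has independent Gaussian entries with a two-valued diagonal covariance $\hat{\qR}_k$: the $I$ correctly detected antennas have variance $\tfrac{\beta_k}{M}+a_k=\tfrac{\beta_k}{M}(1+\tfrac{\sigma_{\sf Tr}^2}{g\tau})$, while the $J$ wrongly detected antennas carry only the noise variance $a_k=\tfrac{\beta_k\sigma_{\sf Tr}^2}{Mg\tau}$. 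This block structure is exactly what produces the two additive terms in the fixed-point variable $e_i$ of \eqref{e}.

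First, I would substitute $\qh_{\hat{\calA},k}=\hat{\qh}_{\hat{\calA},k}-\qe_{\hat{\calA},k}$ into the numerator and each interference term of $\gamma_{\hat{\calA},k}$ and expand. Since $\qe_{\hat{\calA},k}$ is independent of the other columns of $\hat{\qH}_{\hat{\calA}}$ and has i.i.d.\ entries with variance $a_k$, its quadratic forms against $\qA^{-1}$ and $\qA^{-2}$ concentrate around $a_k\cdot\tfrac{1}{M}\tr(\qA^{-1})$ and $a_k\cdot\tfrac{1}{M}\tr(\qA^{-2})$, respectively. For the term $|\hat{\qh}_{\hat{\calA},k}^H\qA^{-1}\hat{\qh}_{\hat{\calA},k}|$, I would apply the rank-1 perturbation identity $\qA^{-1}\hat{\qh}_{\hat{\calA},k}=\tfrac{\qA_{[k]}^{-1}\hat{\qh}_{\hat{\calA},k}}{1+\hat{\qh}_{\hat{\calA},k}^H\qA_{[k]}^{-1}\hat{\qh}_{\hat{\calA},k}}$, where $\qA_{[k]}=\qA-\hat{\qh}_{\hat{\calA},k}\hat{\qh}_{\hat{\calA},k}^H$, together with the trace lemma to obtain $\hat{\qh}_{\hat{\calA},k}^H\qA_{[k]}^{-1}\hat{\qh}_{\hat{\calA},k}-\tfrac{1}{M}\tr(\hat{\qR}_k\qT)\xrightarrow{\sf a.s.}0$, where $\qT=(\tfrac{1}{M}\sum_j\tfrac{\hat{\qR}_j}{1+e_j}+\xi\qI)^{-1}$ is the deterministic equivalent of $\tfrac{1}{M}\qA^{-1}$. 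Because $\hat{\qR}_k$ is diagonal with only two distinct eigenvalue blocks, $\tfrac{1}{M}\tr(\hat{\qR}_k\qT)$ splits into an $I$-weighted and a $J$-weighted sum, which after solving the coupled fixed-point system collapses to the closed form in \eqref{e} with the coupling scalar $\eta=\sum_i\beta_i/(1+e_i)$. The auxiliary quantities $e_{i,k}'$ and $e_i'$ arise naturally as the deterministic equivalents of $\tfrac{1}{M}\tr(\hat{\qR}_k\qT\hat{\qR}_i\qT)$ via the usual derivative-in-$\xi$ construction, yielding the linear system $(\qI_K-\qJ)\qe'=\qv$ with $\qJ$ and $\qv_k,\qv'$ as in \eqref{qJ}--\eqref{qv'}.

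Next, I would handle the power normalization $\alpha^2=MP_T/\bbE\{\tr(\qP_{\sf DL}\hat{\qH}_{\hat{\calA}}^H\qA^{-2}\hat{\qH}_{\hat{\calA}})\}$. The inner trace admits a deterministic equivalent $\sum_i p_{{\sf DL},i}e_i'/(1+e_i)^2=\bar{\lambda}$ by the same differentiation trick, so $\alpha^2/M\to P_T/\bar{\lambda}$ almost surely. Substituting back and collecting the signal term $\tfrac{p_{{\sf DL},k}\alpha^2\mu_k^2/M^2}{(1+e_k)^2}$, the intra-user variance term produced by $\qe_{\hat{\calA},k}$ (which furnishes the $a_k\bar{\lambda}_k\mu_k^2/M$ contribution in the denominator), the inter-user interference $\alpha^2\lambda_k$, and the rescaled noise $\sigma_k^2$, and finally invoking continuity of the map $(x,y)\mapsto x/y$ together with a Vitali/dominated-convergence argument to bound the relevant quantities away from zero, delivers the expression for $\bar{\gamma}_{\hat{\calA},k}$.

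The main obstacle will be carefully bookkeeping the block-diagonal structure of $\hat{\qR}_k$ through the fixed-point system, since the two blocks contribute with different effective weights ($I$ versus $J$) and interact through the common scalar $\eta$; a naive application of the standard deterministic-equivalent lemma, which assumes a single variance profile, would miss the second term in \eqref{e}. I would therefore treat the $(I+J)$-dimensional channel as the concatenation of two sub-vectors with distinct variances and derive a coupled system whose solution admits the closed form above, then verify by direct substitution that the resulting $e_k,e_k',\ldots$ satisfy the expected self-consistency relations. The remaining steps—the rank-1 perturbation, trace lemma, and assembly of signal, interference, and noise—are then routine under the large-system regime $\mathcal{N}\to\infty$.
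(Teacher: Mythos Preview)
Your proposal is correct and follows essentially the same route as the paper: both invoke the matrix-inversion (rank-1 perturbation) lemma, the trace lemma, and the Wagner--Couillet--Debbah deterministic equivalent for the resolvent, and both exploit the fact that the estimated-channel covariance $\hat{\qR}_k$ is diagonal with exactly two eigenvalue levels (one on the $I$ correctly detected antennas, one on the $J$ falsely detected ones), which is what produces the additive $I$- and $J$-terms in $e_i$, $\mu_k$, $\bar{\mu}$, etc.

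The one organizational difference worth flagging is in the interference analysis. You propose to substitute $\qh_{\hat{\calA},k}=\hat{\qh}_{\hat{\calA},k}-\qe_{\hat{\calA},k}$ into $|\qh_{\hat{\calA},k}^H\qw_j|^2$ and then appeal to concentration of $\qe$-quadratic forms against $\qA^{-1}$. The paper instead keeps $\qh_k$ intact, writes the interference as $\qh_k^H\qA^{-1}\hat{\qH}_{[k]}\qP_{[k]}\hat{\qH}_{[k]}^H\qA^{-1}\qh_k$, expands $\qA-\qA_{[k]}=\hat{\qh}_k\hat{\qh}_k^H$ into four rank-1 pieces in $\qh_k$ and $\bar{\qe}_k$, and sets up a small coupled linear system for the cross terms $\qh_k^H\qU_k\qA^{-1}\qh_k$, $\qh_k^H\qU_k\qA^{-1}\bar{\qe}_k$, $\bar{\qe}_k^H\qU_k\qA^{-1}\qh_k$, $\bar{\qe}_k^H\qU_k\qA^{-1}\bar{\qe}_k$ whose solution yields exactly the factors $(1+a_k\bar{\mu})^2$ and $a_k\bar{\lambda}_k\mu_k^2/M$ in the denominator of $\bar{\gamma}_{\hat{\calA},k}$. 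Your substitution route reaches the same destination, but be careful: when you write ``its quadratic forms against $\qA^{-1}$ concentrate around $a_k\cdot\tfrac{1}{M}\tr(\qA^{-1})$'', note that $\qA^{-1}$ still depends on $\qe_{\hat{\calA},k}$ through $\hat{\qh}_{\hat{\calA},k}$, so you must first pass to $\qA_{[k]}^{-1}$ via the rank-1 perturbation before the trace lemma applies; otherwise the cross terms that generate the $(1+a_k\bar{\mu})$ prefactors are lost. The paper's coupled-system bookkeeping makes this dependence explicit and is the safer way to recover the precise coefficients.
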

\begin{proof}
See Appendix \ref{The1proof}.
\end{proof}
Based on Theorem \ref{Theorem1}, we have several remarks as follow.

\emph{Remark 2:} As $\xi$ tends to be infinite and $I>0$, RZF precoder degrades into CB precoder, i.e., $\qW=\alpha\hat{\qH}_{\hat{\calA}}$. Thus, the deterministic result reduces to \eqref{Remark 2}, as shown at the top of next page,
\begin{figure*}[!t]
\begin{equation}\label{Remark 2}
\bar{\gamma}_{\hat{A},k} =\frac{p_{{\sf DL},k}I\beta_k^2}{\beta_k(1+\frac{\sigma_{\sf Tr}^2}{g\tau})\sum_{j\neq k}^{K}p_{{\sf DL},j}\beta_j+\frac{\sigma^2_{k}}{P_T}\sum_{j=1}^{K}p_{{\sf DL},j}\beta_j(1+\frac{\sigma^2_{\sf Tr}}{g\tau}(1+\frac{J}{I}))},
\end{equation}
\hrulefill
\end{figure*}
which is identical with the result obtained in \cite{Zhang22WCSP}. {We observe that the impact of $J$ is limited with high uplink SNR (i.e., $\frac{g\tau}{\sigma^2}$) or high downlink SNR (i.e., $\frac{P_T}{\sigma^2_k}$), whereas $I$ has direct and significant influence on SINR. Furthermore, we notice that the impact of $J$ is concerned with $\frac{J}{I}$, i.e., the impact of $J$ becomes more significant when $I$ is small. Based on these analysis, we can draw conclusion that it is more important to avoid missed detection, and the impact of false detection on SINR can be effectively reduced if most VR antennas are detected correctly. }

\emph{Remark 3:} As $\xi$ and $g\tau$ both grow to infinity, and let $I=L,J=0$, the deterministic result reduces to
\begin{align}
\bar{\gamma}_{\calA,k}=\frac{p_{{\sf DL},k}L\beta_k^2}{\beta_k\sum_{j\neq k}^{K}p_{{\sf DL},j}\beta_j+\frac{\sigma^2_{k}}{P_T}\sum_{j=1}^{K}p_{{\sf DL},j}\beta_j},
\end{align}
which also agrees with the result of \cite[eq.(11)]{Ali19WC}.

\emph{Remark 4:} Notice that VR information $\hat{\calA}$ changes slowly in a long period if scatters are static, thus VR estimation can be performed only once during several coherence blocks. Note that simply accumulating pilot signals from different coherence blocks for VR detection will not help improve the accuracy of estimation since noise is also accumulated.
For a coherence block without VR estimation, the average {EE} can expressed as
\begin{align}
{\bar{{\sf EE}}_{\sf TDD} = \frac{W(T-\tau)}{T}\sum_{I=0}^{L}\sum_{J=0}^{M-L}P_{I,J}^{(t)}\frac{\bar{R}^{\sf sum}_{I,J}}{P_{\sf cir}},}
\end{align}
where $P_{I,J}^{(t)}$ is fixed here, generated from the $t$-th coherence block that performs VR estimation and $\bar{R}_{I,J}^{\sf sum}=\sum_{k=1}^{K}\log(1+\bar{\gamma}_{\hat{\calA},k})$. Herein, the uplink pilot length $\tau$ only impacts the channel estimation error $\qe_{\hat{\calA},k}$.

\subsection{Alternate Optimization}\label{deterministic_appr}
Invoking the approximated expression in Theorem \ref{Theorem1}, we reformulate problem $(\calP1)$ as
\begin{align}
(\mathcal{P}2)\ \max_{\zeta_0>0,\tau=K,K+1,\ldots,T} \bar{{\sf EE}}_{\sf TDD}.
\end{align}
With the deterministic objective function, we can optimize $\{\zeta_0,\tau\}$ without calculating massive Monte-Carlo trials.

Problem $(\calP2)$ does not admit a closed-form solution for $\{\zeta_0,\tau\}$. Thus, we adopt the alternate optimization method on the parameters since they are separate on their constraints. By fixing one of $\{\zeta_0,\tau\}$, we can break problem $(\calP2)$ into two one-dimension subproblems.

\subsubsection{Optimization on $\zeta_0$}\label{zeta_opt}
Given $\tau$, we have following subproblem by dropping irrelevant terms
\begin{align}\label{subpzeta}
\zeta_0^{\sf opt}=\arg\max_{\zeta_0>0}\sum_{I=0}^{L}\sum_{J=0}^{M-L}P_{I,J}(\zeta_0){\frac{\bar{R}_{I,J}^{\sf sum}}{P_{\sf cir}}}.
\end{align}
The above problem is one-dimension, and it can be solved with common convex optimization tools. {However, the expression of the objective function is complex and $\zeta_0$ is couple with many variables. To further reduce the complexity of determining $\zeta_0$, we provide following heuristic solutions. Compared with solving \eqref{subpzeta}, the two solutions depend on fewer variables.}

a)~ Minimizing $P_{01}+P_{10}$:
  Typically, minimizing the sum probability of incorrect detections is utilized to choose the detection threshold, and the corresponding threshold can be given as
\begin{align}
\zeta_0=\frac{\sigma^2_{\sf Tr}(\tau g+\sigma^2_{\sf Tr})}{g}\ln\frac{\sigma^2_{\sf Tr}+\tau g}{\sigma^2_{\sf Tr}}.
\end{align}

b)~ Letting $P_{01}=P_{10}$:
  By setting $P_{01}=P_{10}$, we can obtain a detection threshold which minimizes $P_{01}$ and $P_{10}$ simultaneously. Thus, $\zeta_0$ can be obtained by {solving following equation}.
\begin{align}\label{P01=P10}
\zeta_0=\frac{\sigma_{\sf Tr}^2(\tau g+\sigma_{\sf Tr}^2)}{g}\ln(\exp(\frac{\zeta_0}{\tau\sigma^2_{\sf Tr}})-1).
\end{align}
{The numerical solution of \eqref{P01=P10} can be determined by alternately and iteratively updating the right and left side of the equation respectively until $\zeta_0$ achieves convergence.}

\subsubsection{Optimization on $\tau$}
Given $\zeta_0$, we can optimize $\tau$ by solving following subproblem.
\begin{align}\label{subptau}{\tau^{\sf opt}=\arg\max_{\tau=K,K+1,...,T}\frac{W(T-\tau)}{T}\sum_{I=0}^{L}\sum_{J=0}^{M-L}P_{I,J}(\tau)\frac{\bar{R}_{I,J}^{\sf sum}(\tau)}{P_{\sf cir}}.}
\end{align}
The optimal length of uplink pilot $\tau$ which maximizes the approximate {EE} can be identified by searching {$\tau=K,K+1,\ldots,T$. Specifically, we will calculate the numerical value of $\bar{\sf EE}_{\sf TDD}(\tau)$ for $\tau=K,K+1,\ldots,T$ and find $\tau^{\sf opt}$. To calculate $\bar{R}_{I,J}^{\sf sum}$ for one time, the complexity can be roughly measured by $\mathcal{O}(K^3)$ from the matrix inversion when calculating $e'_{i,k}$ and $e'_i$. Since $\bar{\sf EE}_{\sf TDD}(\tau)$ requires to calculate $\bar{R}_{I,J}^{\sf sum}$ for $L(M-L+1)$ times, the total complexity can be measured by $\mathcal{O}((T-K)L(M-L+1)K^3)$, which is acceptable because the scheme is based on statistical CSI.}

The procedure of proposed alternate optimization algorithm is summarized as {\bfseries Algorithm 1}.
\begin{algorithm}[htb]
\caption{Alternate Optimization Algorithm for Problem $(\calP2)$}\label{AOMethod}
\begin{algorithmic}[1]
\REQUIRE $\{\beta_k\}_{k=1,2,\ldots,K},P_T,\sigma^2_{\sf Tr},g,\{\sigma^2_{k}\}_{k=1,2,\ldots,K},\xi$.
\STATE \textbf{Initialization:} $i=0,\tau^{(0)}=K$;
\REPEAT
\STATE $i=i+1$;
\STATE Update $\zeta_0^{(i)}$ by solving \eqref{subpzeta};
\STATE Update $\tau^{(i)}$ by solving \eqref{subptau};
\UNTIL $\tau^{(i-1)}==\tau^{(i)}$;
\RETURN $\zeta_0^{(i)},\tau^{(i)}$;
\end{algorithmic}
\end{algorithm}

\section{Achievable {Energy }Efficiency of FDD Systems}\label{FDD_cond}
In FDD systems, the reduction of BS antennas is also meaningful since it helps save the cost of pilot training by employing only VR antennas. The complete communication procedure in an FDD system contains three stages, i.e., 1) VR estimation, 2) Downlink channel estimation, and 3) Downlink data transmission, as shown in Fig. \ref{FDDpilot}. Different from TDD mode, the channels are not reciprocal between uplink and downlink in FDD systems, where the BS obtains the downlink CSI through the feedback channel from users. {In this paper, we consider the downlink CSI is fed back to BS with quantization error through a finite feedback channel \cite{Wagner12TIT}}.

\begin{figure}
\includegraphics[width=0.45\textwidth]{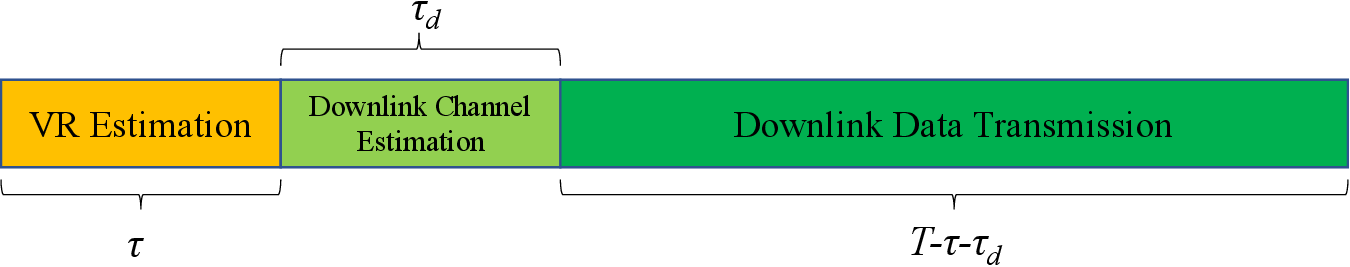}
\centering
\caption{The communication procedure of FDD systems.} \label{FDDpilot}
\end{figure}

The uplink VR estimation is similar to the VR estimation at TDD mode in Section \ref{VRest} and it leads to the same probabilities of two error detections. Thus, we consider the downlink channel estimation and data transmission.

\subsection{Downlink Channel Estimation in FDD Systems}
The BS sends unitary pilot sequences to all users at the detected VR $\hat{\calA}$. The received signal in the $k$-th user $\qy_{\hat{\calA},k}^H=\sqrt{p_{\sf dp}}\qh_{\hat{\calA},k}^{H}\qX+\qn^{H}_{\hat{\calA},k}$,
where $p_{\sf dp}$ is the power of downlink pilot and the pilot $\qX\in\bbC^{(I+J)\times\tau_d}$ satisfies $\qX\qX^{H}=\tau_dM\qI_{I+J}$.
$\tau_d\geq I+J$ is the length of downlink pilot. By adopting LS estimation, the estimated downlink channel can be expressed as $\hat{\qh}_{\hat{\calA},k}^{H}=\qh_{\hat{\calA},k}^{H}+\qe_{\hat{\calA},k}^{H}$,
where the estimation error $\qe_{\hat{\calA},k}\in\bbC^{(I+J)\times1}$ is with respect to $\mathcal{CN}(\mathbf{0},\frac{\sigma_{k}^2}{p_{\sf dp}M\tau_d}\qI_{I+J})$.

{The BS can obtain the downlink channel through a feedback channel, which incurs the quantization error. Thus, the channel information received at the BS can be expressed as \cite{Wagner12TIT,Jun13TWC}
\begin{align}
\tilde{\mathbf{h}}_{\hat{\mathcal{A}},k}=\sqrt{1-q^2}\hat{\mathbf{h}}_{\hat{\mathcal{A}},k}^{H}+q\mathbf{e}_{q},
\end{align}
where $0\leq q\leq1$ measures the quantization accuracy of feedback channel, and $\mathbf{e}_{q}\in\mathbb{C}^{(I+J)\times 1}$ is the random quantization error with respect to $\mathcal{CN}(0,\frac{\beta_k}{M}\mathbf{I})$. $q$ satisfies $q^2\leq2^{-\frac{B}{I+J-1}}$ \cite{Wagner12TIT}, where $B$ is the number of quantization bit for each user. In simulations, we set $q^2=2^{-\frac{B}{I+J-1}}$ for simplicity \cite{Wagner12TIT,Jun13TWC}.
}
\subsection{Downlink Data Transmission and Average {EE}}
With estimated channel {$\tilde{\qh}_{\hat{\calA},k}$}, we can adopt RZF precoding for downlink transmission and the ergodic average {EE} can be given as
\begin{align}
{{\sf EE_{FDD}}=W\sum_{I=0}^{L}\sum_{J=0}^{M-L}\frac{T-\tau-\tau_d}{TP_{\sf cir}}P_{I,J}R_{I,J}^{\sf FDD},}
\end{align}
where the ergodic sum-rate $R^{\sf FDD}_{I,J}$ in FDD systems can be expressed as
\begin{align}
R_{I,J}^{\sf FDD}=\sum_{k=1}^{K}\bbE\{\log(1+\gamma^{\sf FDD}_{\hat{\calA},k})\},
\end{align}
and the SINR $\gamma^{\sf FDD}_{\hat{\calA},k}$ of FDD systems has the same structure as \eqref{ergSINR} in TDD systems.
Similarly, to design the detection threshold $\zeta_0$ and uplink pilot length $\tau$, we give the deterministic approximation for the downlink SINR in FDD systems as follow.
{\begin{theorem}\label{Theorem1FDD}
As $\calN\rightarrow\infty$, we have
\begin{align}
\gamma^{\sf FDD}_{\hat{\calA},k}-\bar{\gamma}^{\sf FDD}_{\hat{\calA},k}\stackrel{\sf a.s.}{\longrightarrow}0,
\end{align}
where $\bar{\gamma}^{\sf FDD}_{\hat{\calA},k}$ is expressed as
\begin{align}
&\bar{\gamma}^{\sf FDD}_{\hat{\calA},k}\nonumber\\
&=\frac{\frac{1-q^2}{M} p_{{\sf DL},k}\mu_{k}^2}{\lambda_k(1+a_k\bar{\mu})^2+\frac{1-q^2}{M}a_k\bar{\lambda}_k\mu_{k}^2+\frac{\bar{\lambda}\sigma_{k}^2}{P_T}(1+a_k\bar{\mu}+\frac{1-q^2}{M}\mu_k)^2},
\end{align}
where $a_k=\frac{(1-q^2)\sigma_{k}^2}{p_{\sf dp}M\tau_d}+\frac{\beta_kq^2}{M}$, $\mu_k=\frac{I\beta_k}{\eta_1/M+\eta_2+\xi}$, $\bar{\mu}=\frac{I}{\eta_1/M+\eta_2+\xi}+\frac{J}{\eta_2+\xi}$,
and $\lambda_k$, $\bar{\lambda}_k$ and $\bar{\lambda}$ can be given by \eqref{lambda_k}, \eqref{bar_lambda_k} and \eqref{bar_lambda}, respectively. Other parameters are given as $\eta_1=\sum_{i=1}^{K}\frac{(1-q^2)\beta_i}{1+e_i}$, $\eta_2=\sum_{i=1}^{K}\frac{a_i}{1+e_i}$, $e'_{i,k}=[(\frac{1}{1-q^2}\qI_K-\qJ)^{-1}\qv_k]_i$, $e'_i=[(\frac{1}{1-q^2}\qI_K-\qJ)^{-1}\qv']_i$, $[\qv_k]_i=\frac{\beta_k(\beta_i/M+a_i/(1-q^2))I}{(\eta_1/M+\eta_2+\xi)^2}$ and
\begin{align}
e_i=&\frac{(\frac{(1-q^2)\beta_i}{M}+a_i)I}{\frac{\eta_1}{M}+\eta_2+\xi}+\frac{a_iJ}{\eta_2+\xi},\\
[\qv']_i=&\frac{(\frac{\beta_i}{M}+\frac{a_i}{1-q^2})I}{(\frac{\eta_1}{M}+\eta_2+\xi)^2}+\frac{\frac{a_i}{1-q^2}J}{(\eta_2+\xi)^2}.
\end{align}
\begin{align}
[\qJ]_{i,j}=&\frac{1-q^2}{(1+e_j)^2}\Bigg(\frac{(\frac{\beta_i}{M}+\frac{a_i}{1-q^2})(\frac{\beta_j}{M}+\frac{a_j}{1-q^2})I}{(\frac{\eta_1}M+\eta_2+\xi)^2} \nonumber \\
 &+\frac{\frac{a_ia_j}{(1-q^2)^2}J}{(\eta_2+\xi)^2} \Bigg).
\end{align}
\end{theorem}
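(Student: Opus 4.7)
The plan is to extend the large-system argument used for Theorem~\ref{Theorem1} (cf.\ Appendix~\ref{The1proof}) so that it absorbs the additional quantization error injected by the finite-rate feedback channel. The starting point is to substitute $\qw_k = \alpha\qA^{-1}\tilde{\qh}_{\hat{\calA},k}$ with $\qA = \tilde{\qH}_{\hat{\calA}}\tilde{\qH}_{\hat{\calA}}^H + \xi\qI$ into $\gamma^{\sf FDD}_{\hat{\calA},k}$, and then decompose the effective channel seen at the BS as
\[
\tilde{\qh}_{\hat{\calA},k} = \sqrt{1-q^2}\,\qh_{\hat{\calA},k} + \qn_{{\sf eff},k}, \quad \qn_{{\sf eff},k} = \sqrt{1-q^2}\,\qe_{\hat{\calA},k} + q\,\qe_q,
\]
where $\qn_{{\sf eff},k}$ is zero mean, independent of $\qh_{\hat{\calA},k}$, and has i.i.d.\ entries with per-entry variance $a_k = \frac{(1-q^2)\sigma_k^2}{p_{\sf dp}M\tau_d} + \frac{\beta_k q^2}{M}$. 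This is exactly the $a_k$ that appears in the statement, and it is the vehicle by which the downlink pilot noise and the feedback quantization error enter the deterministic equivalent jointly.

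Under this decomposition, $\tilde{\qh}_{\hat{\calA},k}$ has a diagonal covariance with a two-tier structure: on the $I$ correctly detected antennas (indices in $\hat{\calA}\cap\calA$) the per-entry variance is $\frac{(1-q^2)\beta_k}{M} + a_k$, while on the $J$ wrongly detected antennas the true-channel term vanishes and only $a_k$ remains. This two-tier variance profile is precisely what produces the two separate terms, one proportional to $I$ and one to $J$, in the fixed-point equation for $e_i$, in $[\qv']_i$ and in $[\qJ]_{i,j}$. The quantities $\eta_1$ and $\eta_2$ likewise arise as sums of the ``signal-tier'' contributions $(1-q^2)\beta_i/(1+e_i)$ and the ``noise-tier'' contributions $a_i/(1+e_i)$ respectively.

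From this point the argument is a direct application of the deterministic-equivalent toolbox used in Appendix~\ref{The1proof}: the matrix inversion lemma peels off the rank-one contribution of $\tilde{\qh}_{\hat{\calA},k}$ from $\qA$; the trace lemma replaces each quadratic form $\tilde{\qh}_i^H \qA_i^{-1}\tilde{\qh}_i$ by $\frac{1}{M}\tr(\qR_i^{\sf eff}\qA_i^{-1})$; the rank-one perturbation lemma lets one swap $\qA_i$ for $\qA$ freely; and the Stieltjes-transform fixed-point characterization of the resulting separable-variance-profile matrix yields the system defining $\{e_i\}$. The useful-signal deterministic equivalent $\mu_k$ emerges from $\qh_k^H\qA^{-1}\qh_k$ (on $\hat{\calA}$, only $I$ entries of variance $\beta_k/M$ contribute), the prefactor $\sqrt{1-q^2}$ in front surfaces because $\qh_k^H\qA^{-1}\tilde{\qh}_k \approx \sqrt{1-q^2}\,\qh_k^H\qA^{-1}\qh_k$ after averaging out the independent $\qn_{{\sf eff},k}$, and $\bar{\mu}$ tracks the normalization trace $\frac{1}{M}\tr\qA^{-1}$ restricted to the two tiers. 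The derivative-type quantities $\lambda_k$, $\bar{\lambda}_k$ and $\bar{\lambda}$ are obtained by differentiating the fixed-point system through $\qv_k$ and $\qv'$ and therefore inherit the two-tier structure as well.

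The main obstacle is purely bookkeeping: carrying the two-tier variance profile cleanly through every Stieltjes-type quantity, and keeping track of where the factor $\sqrt{1-q^2}$ from the aligned direction and the $q^2$ contained in $a_k$ from the quantization error each enter the signal, interference, and power-normalization terms. Once the bookkeeping is in place, every remaining step mirrors its counterpart in the proof of Theorem~\ref{Theorem1}. As consistency checks, setting $q=0$ and $a_k = \frac{\sigma^2_{\sf Tr}}{M\tau p_{{\sf Tr},k}}$ should collapse every quantity in Theorem~\ref{Theorem1FDD} back onto its analogue in Theorem~\ref{Theorem1}, and taking $\xi\to\infty$ with $I>0$ should recover the CB-precoder limit analogous to Remark~2.
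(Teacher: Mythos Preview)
Your proposal is correct and takes essentially the same approach as the paper, which merely states that the proof ``is similar to that of Theorem~\ref{Theorem1}'' and omits the details. Your decomposition $\tilde{\qh}_{\hat{\calA},k}=\sqrt{1-q^2}\,\qh_{\hat{\calA},k}+\qn_{{\sf eff},k}$ with per-entry noise variance $a_k=\frac{(1-q^2)\sigma_k^2}{p_{\sf dp}M\tau_d}+\frac{\beta_kq^2}{M}$, followed by the two-tier $(I,J)$ variance profile and the same matrix-inversion/trace/rank-one/Stieltjes machinery from Appendix~\ref{The1proof}, is exactly the intended route; the consistency check at $q=0$ that collapses back to Theorem~\ref{Theorem1} confirms the bookkeeping.
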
}
\begin{proof}
The proof of Theorem \ref{Theorem1FDD} is similar to that of Theorem \ref{Theorem1}, thus we omit it here due to limited space.
\end{proof}

\emph{Remark 5:} From Theorem \ref{Theorem1FDD}, SINR is mainly concerned with downlink pilot length $\tau_d$. For each VR detection, it is impractical to obtain an optimal $\tau_d$ in closed-form since $I$ and $J$ are unknown. To reduce pilot overhead and meet the orthogonality of downlink pilots, $\tau_d$ can be simply set as $I+J$. However, we can figure $\tau_d^{\sf opt}$ for the cases of perfect VR detection and without VR detection, respectively, and the former can be viewed as the upper bound of average {EE}. Invoking Theorem \ref{Theorem1FDD}, $\tau_d^{\sf opt}$ for perfect VR can be obtained by solving
\begin{align}\label{tau_d_opt}
{\max_{\tau_d=L,L+1,\ldots,T-\tau}\quad \frac{W(T-\tau-\tau_d)}{TP_{\sf cir}}\sum_{k=1}^{K}\log(1+\bar{\gamma}_{\calA,k}^{\sf FDD}),}
\end{align}
where $\tau_d^{\sf opt}$ can be easily found by simple one-dimension search.

{\emph{Remark 6:} From Theorem \ref{Theorem1FDD}, the feedback error $q$ will directly impact the SINR on estimated VR. By assuming $q^2=2^{-\frac{B}{I+J-1}}$ \cite{Wagner12TIT}, we observe that the false detection directly impacts the feedback error due to limited feedback bits. Therefore, FDD systems are more sensitive to VR detection error than TDD systems. The missed detection and false detection should be both avoided in FDD systems. }

Based on Theorem \ref{Theorem1FDD}, we can formulate following deterministic problem
{\begin{align}
(\calP3)\ \max_{\zeta_0>0,\tau=K,K+1,\ldots,T-I-J}W&\sum_{I=0}^{L}\sum_{J=0}^{M-L}\Big(\frac{T-\tau-I-J}{TP_{\sf cir}}\nonumber\\
&\times P_{I,J}\bar{R}^{\sf FDD}_{I,J}\Big),
\end{align}
where $\bar{R}^{\sf FDD}_{I,J}=\sum_{k=1}^{K}\log(1+\bar{\gamma}^{\sf FDD}_{\hat{\calA},k})$.} Notice that we have set $\tau_d=I+J$, thus the number of wrongly estimated antennas $J$ makes significant difference to average {EE}. Similar to problem ($\calP2$), ($\calP3$) can also be broken into two subproblems, i.e., optimizing $\zeta_0$ with fixed $\tau$ and optimizing $\tau$ with fixed $\zeta_0$, and eventually solved by alternate optimization. The former subproblem can be solved by proposed detection thresholds in subsection \ref{zeta_opt}, and the latter subproblem can be solved by one-dimension search like solving \eqref{subptau}.

\section{NUMERICAL RESULTS}\label{simulations}
In this section, we validate the accuracy of the derived deterministic approximate results and illustrate the effectiveness of our optimization on $\{\zeta_0,\tau\}$ by numerical simulations. The large-scale factor $\beta$ is modeled as $-10^{-0.53}/d_k^{3.76}$ \cite{Lv18JIoT}, where $d_k$ is measured in meter. For simplicity, we let the downlink noise power and large-scale fading of different user be the same in simulations, i.e., { $\sigma_{1}^2=\sigma_{2}^2=\dots=\sigma_{K}^2$} and $\beta_1=\beta_2=\dots=\beta_K$. We adopt the uniform power allocation and the RZF regularization factor is set as $\xi=\sigma^2_{k}/(p_{{\sf DL},k}MP_T)$ \cite[eq.(28)]{Jun13TWC}. The default settings are as follow:
the power of user pilot $p_{{\sf Tr},k} = 0.1$W ($20$dBm) \cite{GA-20PC}, {the uplink noise power $\sigma_{\sf Tr}^2 = -96$dBm, the bandwidth $W=100$MHz, $P_{\sf syn}=50$mW, $P_{\sf CT}=48.2$mW, $P_{\sf CR}=62.5$mW, $\varsigma=0.35$ \cite{Saba18TWC}, the number of antennas in the BS $M = 128$, the length of VR $L = 16$,} the number of users $K = 4$, the distance between the BS and users $d_k = 400$m, the length of the channel coherence block is $T=200$ symbols and the number of Monte-Carlo trials is $10^4$. For FDD systems, we set $\tau_d=I+J$ to save pilot cost.

\subsection{Accuracy of Deterministic Approximate Results}
We will illustrate the accuracy of our derived deterministic approximate results for $P_{10}$, $P_{01}$, $R_{I,J}^{\sf sum}$ in this part.
\begin{figure}
\includegraphics[width=0.45\textwidth]{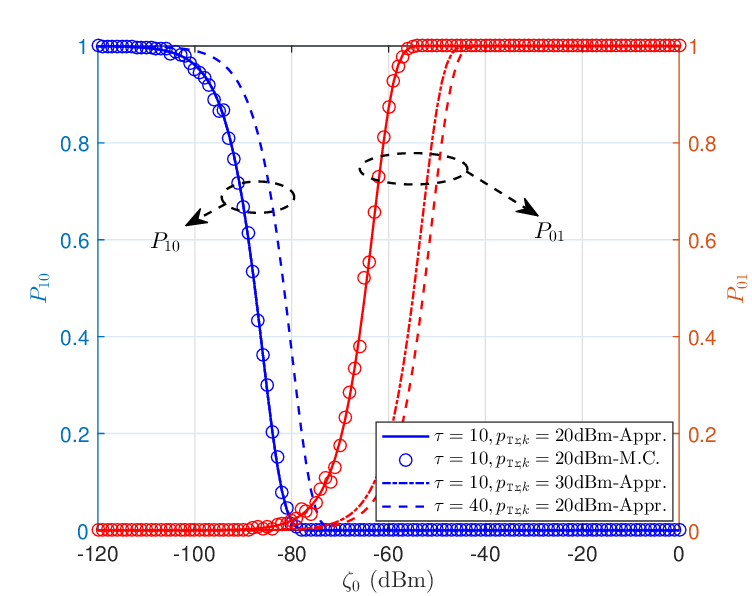}
\centering
\caption{The probabilities of error events versus detection threshold.} \label{Fig1}
\end{figure}
Fig. \ref{Fig1} illustrates the correctness of our derived probabilities of two error events. The suffixes ``M.C." and ``Appr." stand for Monte-Carlo results and the approximate results obtained by \eqref{eq:Pr10-2}, \eqref{eq:Pr01-2}, respectively. The results are identical to each other, which validates the effectiveness of our analysis on VR detection. As the detection threshold rises, the probability of ``false detection", i.e., $P_{10}$, decreases to zero, while the ``missed detection" probability, i.e., $P_{01}$, increases to maximum. To obtain satisfactory average {EE}, the detection threshold needs to be set elaborately. From Fig. \ref{Fig1}, increasing the pilot length or pilot power can both reduce the probabilities of two error events simultaneously. However, the pilot power is budgeted and the pilot length also impacts the average {EE}, which also need to be carefully determined.

\begin{figure}
\includegraphics[width=0.45\textwidth]{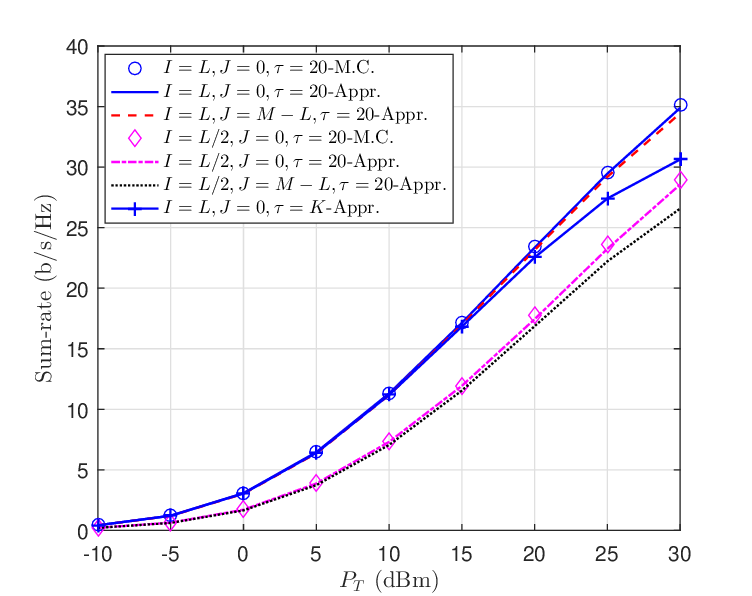}
\centering
\caption{{The sum-rate of different $\{I,J\}$ in TDD systems}.} \label{Fig2}
\end{figure}

Fig. \ref{Fig2} shows the sum-rate performance of different $\{I,J\}$ in TDD mode. First, the deterministic expression of sum-rate can approximate the ergodic result well, even when the number of antennas is limited. {For the curves where $I=L/2$, the performance suffers significantly loss when compared with $I=L$ in the whole power region due to a smaller $I$. The gap caused by false detection is not obvious when $I=L$, whereas this gap grows when $I=L/2$ in high power region. {This is because the linear precoding we adopted utilizes the estimated channel information, where the signal strength on the falsely detected antennas is already very low, and thus the power allocated to the falsely detected antenna is also very low. Therefore, to maintain high sum-rate, it is more important to avoid missed detection.} Besides, the decrease in $\tau$ also leads to performance loss in high power region, because a smaller $\tau$ brings larger channel estimation error.} Therefore, an accurate VR estimation not only helps reduce the precoding complexity but also lift system performances.

\begin{figure}
\includegraphics[width=0.45\textwidth]{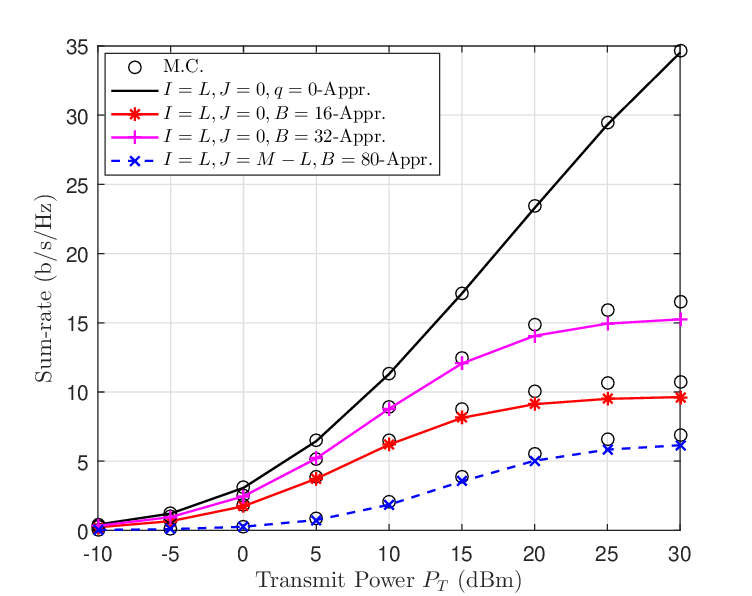}
\centering
\caption{{The sum-rate of different $\{I,J,B\}$ in FDD systems with $\tau_d=20$.}} \label{Fig4}
\end{figure}

{Fig. \ref{Fig4} illustrates the sum-rate of different $\{I,J,B\}$ in FDD system. First, the deterministic approximation can fit the ergodic result well. Compared with TDD systems, the impact of false detection becomes significant. Generally, a larger $B$ lowers $q$ and brings better performance when VR is correctly detected. However, when $J=M-L$, the performance suffers significant loss even if the number of feedback bit is high. Therefore, an accurate VR detection is of great importance for FDD systems, not only helps reduce complexity and channel estimation overhead but also improve performances remarkably.} For clarity, we only show approximate results in the remaining simulations.

\subsection{Optimization on Detection Threshold and Pilot Length}

\begin{figure}
\includegraphics[width=0.45\textwidth]{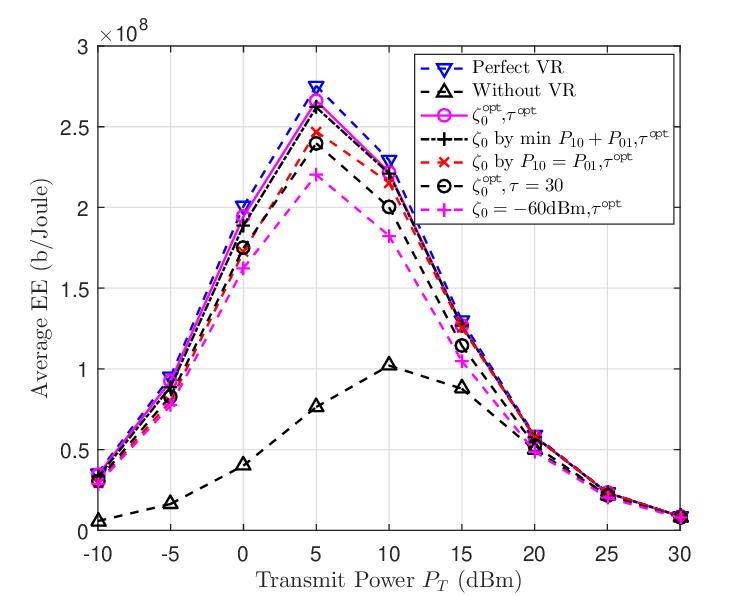}
\centering
\caption{{The comparisons of different detection thresholds in TDD systems.}} \label{Fig5}
\end{figure}

Fig. \ref{Fig5} compares the average {EE} obtained by different detection thresholds with RZF precoder. Notice that the ``Perfect VR" and ``Without VR" curves are obtained by assuming the correct VR estimation, i.e., $I=L,J=0$, and no VR estimation, i.e., $I=L,J=M-L$, respectively. Note that the curve of perfect VR can be viewed as the upper bound here, while the curve of without VR is not the lower bound, which means an improper threshold may result in significant performance loss when compared with no VR estimation condition. {Among the four thresholds shown in Fig. \ref{Fig5}, the optimal threshold $\zeta_0^{\sf opt}$ is obtained by the fminbnd function in MATLAB and it achieves the best performance. The `` $\zeta_0$ by min $P_{10}+P_{01}$" outperforms the threshold obtained by letting $P_{01}=P_{10}$, and it is also very close to the performance obtained by $\zeta_0^{\sf opt}$. The fixed threshold $\zeta_0=-60$dBm results in the poorest {EE}. However, the schemes with VR detection all gain significant improvement when compared with ``Without VR" condition, because the activated antennas consume fixed circuit power even if there is little transmit power allocated on these antennas. This again proves the necessity and importance of accurate VR detection in spatially non-stationary XL-MIMO.}

\begin{figure}
\includegraphics[width=0.45\textwidth]{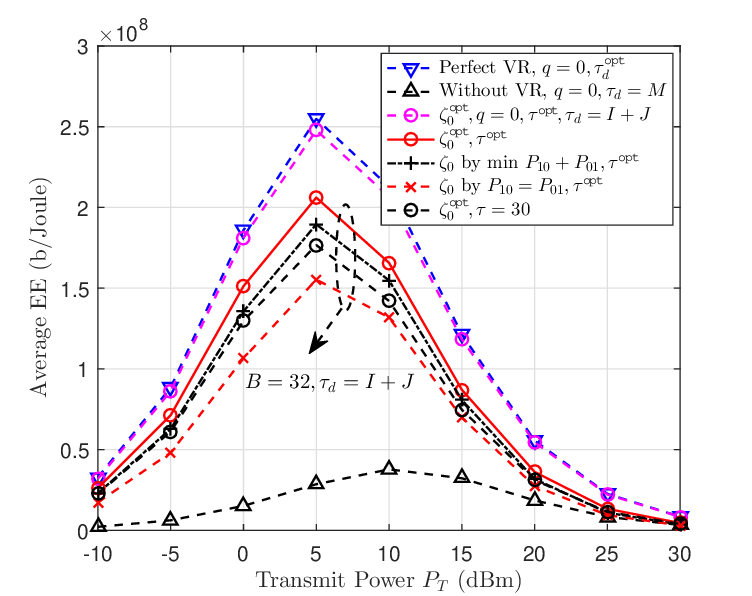}
\centering
\caption{{The average {EE} obtained by alternate optimization and different detection thresholds in FDD systems.}} \label{Fig6}
\end{figure}

Fig. \ref{Fig6} demonstrates the average {EE} obtained by alternate optimization and different detection thresholds in FDD systems, where $p_{\sf dp}=p_{{\sf Tr},k}$. {Obviously, compared with the condition without feedback error, i.e., $q=0$, the EE performance with $B=32$ is heavily affected by the feedback bits. Compared with the result in TDD mode, the performance gap caused by different schemes becomes larger. Besides, we also notice that even without feedback error, the ``Without VR" condition suffers greater performance loss than that in TDD mode because downlink channel estimation in FDD systems requires more pilot. This indicates that FDD systems are more sensitive to VR detection error. In general, the accuracy of VR estimation is crucial for FDD systems, which helps reduce the overhead of training pilot and improve system performances simultaneously.

Based on the results in Fig. \ref{Fig5} and Fig. \ref{Fig6}, we observe that the threshold obtained by minimizing $P_{10}+P_{01}$ can achieve performance close to that by optimal threshold with much lower complexity. Therefore, the detection threshold obtained by minimizing $P_{10}+P_{01}$ is more suitable in practical systems.}

\begin{figure}
\includegraphics[width=0.45\textwidth]{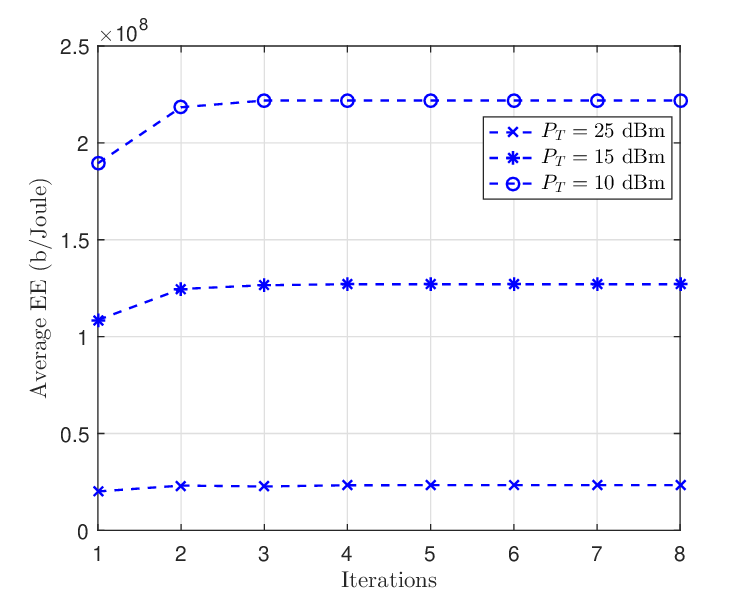}
\centering
\caption{{Convergence behaviour of alternate optimization in TDD systems.}} \label{iterations}
\end{figure}

As illustrated in Fig. \ref{iterations}, the proposed alternate optimization algorithm can quickly converge to a fixed point with different transmit power. Notice that the convergence may not be monotonically increasing with more iterations due to the initial settings and discrete value of $\tau$.


\begin{figure}
\includegraphics[width=0.45\textwidth]{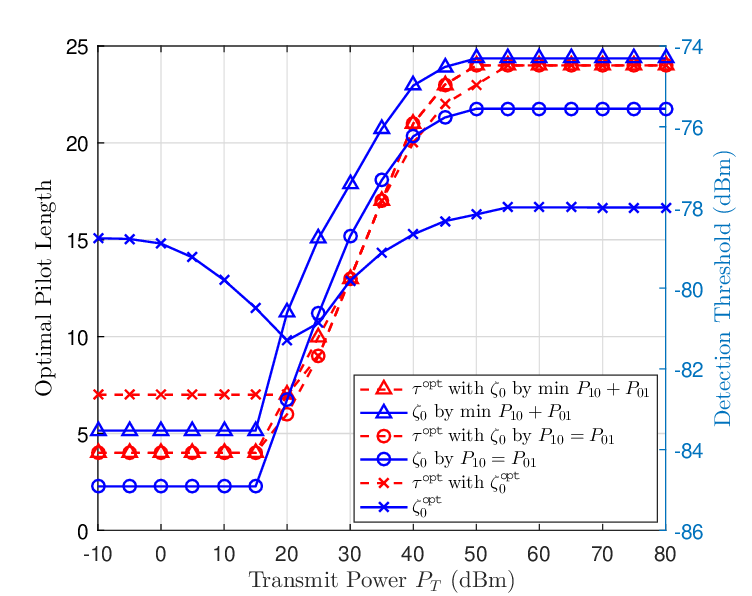}
\centering
\caption{{The optimal pilot length and detection threshold by different schemes in TDD systems.}} \label{Fig7}
\end{figure}

Fig. \ref{Fig7} illustrates the optimal pilot length and detection threshold by different schemes. {Generally, both the optimal pilot length and detection threshold increase as transmit power grows from $20$dBm to $50$dBm, and eventually saturate at fixed values with sufficiently high transmit power. In the low power region, e.g., -10dBm to 20dBm, the threshold by minimizing $P_{10}+P_{01}$ is closer to the optimal result, {and the pilot length by minimizing $P_{10}+P_{01}$ is almost the same as that by $P_{10}=P_{01}$}. This indicates that in low power region, EE with imperfect VR is more sensitive to the setting of threshold than optimizing the pilot length. }

\section{CONCLUSION}\label{conclusions}
This paper investigates the ergodic downlink sum-rate and average {EE} by considering the VR estimation error in a multi-user XL-MIMO system. Exploiting the energy detection method, we derive the probabilities of two error events, i.e., the missed detection and false detection, during the VR estimation stage. To reduce computation complexity, we perform uplink channel estimation on detected VR in TDD mode, or send downlink pilot for channel estimation in FDD mode. Based on estimated channels, the ergodic average {EE} maximization problem is formulated by adopting RZF precoder. To avoid massive Monte-Carlo trials, we derive deterministic approximate expressions for average {EE} in both TDD and FDD systems to optimize uplink pilot length and detection threshold. The simulations reveal the influences of VR estimation error and demonstrate the effectiveness of our proposed schemes.

\begin{appendices}
\section{Proof of Theorem \ref{Theorem1}}\label{The1proof}
Before giving detailed proof on Theorem \ref{Theorem1}, we first introduce some useful lemmas and theorems.
{\begin{lemma}\label{lemma1}[Matrix inversion lemma]\emph{(\cite[lemma 2.2]{Silverstein95JMA})}
Let $\qC\in\mathbb{C}^{M\times M}$ be a Hermitian invertible matrix and for any $\qa\in\mathbb{C}^{M\times 1},c\in\mathbb{C}$ that $\qC+c\qa\qa^{H}$ is invertible, we have
\begin{align}
\qa^{H}(\qC+c\qa\qa^{H})^{-1}=\frac{\qa^{H}\qC^{-1}}{1+c\qa^{H}\qC^{-1}\qa}.
\end{align}
\end{lemma}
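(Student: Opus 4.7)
The plan is to verify the identity by a direct right-multiplication, which avoids invoking the full Sherman–Morrison formula and reduces the proof to a one-line algebraic check. Set $\qB = \qC + c\qa\qa^H$ for brevity. Since $\qC$ is invertible, $\qa^H \qC^{-1}$ is well-defined, and by hypothesis $\qB^{-1}$ exists. I will show that the candidate expression on the right, call it
\begin{equation}
\qr^H \;:=\; \frac{\qa^H \qC^{-1}}{1 + c\,\qa^H \qC^{-1} \qa},
\end{equation}
satisfies $\qr^H \qB = \qa^H$; multiplying on the right by $\qB^{-1}$ then yields the claim.

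The computation expands as
\begin{align}
\qr^H \qB
&= \frac{\qa^H \qC^{-1}\bigl(\qC + c\,\qa\qa^H\bigr)}{1 + c\,\qa^H \qC^{-1} \qa} \\
&= \frac{\qa^H + c\,\qa^H \qC^{-1}\qa\,\qa^H}{1 + c\,\qa^H \qC^{-1} \qa} \\
&= \frac{\bigl(1 + c\,\qa^H \qC^{-1}\qa\bigr)\,\qa^H}{1 + c\,\qa^H \qC^{-1} \qa}
\;=\; \qa^H,
\end{align}
where the factorization in the penultimate step uses the fact that $\qa^H \qC^{-1}\qa$ is a scalar and so commutes freely with $\qa^H$.

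The only nontrivial issue is ensuring that the scalar $1 + c\,\qa^H \qC^{-1}\qa$ is nonzero, so that the division in the definition of $\qr^H$ is legitimate. I would discharge this by noting the standard determinant identity $\det(\qB) = \det(\qC)\bigl(1 + c\,\qa^H \qC^{-1}\qa\bigr)$ (a direct consequence of the matrix determinant lemma, or of block-matrix expansion applied to $\bigl[\begin{smallmatrix}\qC & -\qa\\ c\qa^H & 1\end{smallmatrix}\bigr]$). Since $\qC$ and $\qB$ are both invertible by hypothesis, $\det(\qB)/\det(\qC) \neq 0$, forcing $1 + c\,\qa^H \qC^{-1}\qa \neq 0$. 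This is the only delicate point; the rest is pure algebraic manipulation, and no structural property of $\qC$ beyond invertibility (not even Hermitian-ness) is actually needed for the identity itself.
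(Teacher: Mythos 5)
Your proof is correct. Note that the paper itself offers no proof of this lemma at all --- it is stated as a citation to \cite[lemma 2.2]{Silverstein95JMA} and used as a black box --- so there is no in-paper argument to compare against. Your direct verification is the standard one and is complete: right-multiplying the candidate row vector by $\qC+c\qa\qa^{H}$, using that $\qa^{H}\qC^{-1}\qa$ is a scalar, and then peeling off $(\qC+c\qa\qa^{H})^{-1}$ settles the identity. You also correctly identify and discharge the one genuinely delicate point, namely that $1+c\,\qa^{H}\qC^{-1}\qa\neq 0$; the determinant identity $\det(\qC+c\qa\qa^{H})=\det(\qC)\bigl(1+c\,\qa^{H}\qC^{-1}\qa\bigr)$ together with the assumed invertibility of both matrices does exactly that job, and this step is often glossed over in textbook statements. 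Your closing observation that Hermitian symmetry of $\qC$ is never used is also accurate --- the hypothesis is inherited from the random-matrix context of the cited source (where $\qC$ is a resolvent-type matrix) rather than being needed for the algebra. The only cosmetic remark is that your argument is essentially the Sherman--Morrison computation restricted to the row $\qa^{H}$, so describing it as ``avoiding'' that formula is a matter of packaging rather than substance.
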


\begin{lemma}\label{2-3lem}\emph{\cite[lemma 14.2]{Couillet11}, \cite[lemma 5]{Wagner12TIT}, \cite[lemma 4]{Hoydis13JSAC}:}
Assume that $\qC\in\mathbb{C}^{M\times M}$ has uniformly bounded spectral norm with respect to $M$, and for any mutually independent (also independent to $\qC$) $\qa\in\mathbb{C}^{M\times1},\qb\in\mathbb{C}^{M\times1}\sim\mathcal{CN}(\mathbf{0},\frac{1}{M}\qI_{M})$, as $M\rightarrow\infty$, we have
$(i).\ \qa^{H}\qC\qa-\frac{1}{M}{\sf tr}(\qC)\stackrel{\sf a.s.}{\longrightarrow}0$,
$(ii).\ \qa^{H}\qC\qb\stackrel{\sf a.s.}{\longrightarrow}0$,
$(iii).\ \mathbb{E}\{|\qa^{H}\qC\qa|^2\}-\mathbb{E}\{|\frac{1}{M}{\sf tr}(\qC)|^2\}\stackrel{\sf a.s.}{\longrightarrow}0$.
\end{lemma}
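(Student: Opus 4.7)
The plan is to prove each of the three statements by the standard ``compute the first two moments, then apply Chebyshev together with Borel--Cantelli'' recipe, exploiting the fact that $\qa$ and $\qb$ are complex Gaussian and independent of $\qC$.

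For part (i), I would first condition on $\qC$ and use Gaussianity of $\qa\sim\mathcal{CN}(\mathbf{0},\tfrac{1}{M}\qI_M)$ to compute $\bbE\{\qa^{H}\qC\qa\mid\qC\}=\frac{1}{M}\tr(\qC)$ directly from $\bbE\{a_i^{*}a_j\}=\frac{1}{M}\delta_{ij}$. The centered quadratic form $X_M:=\qa^{H}\qC\qa-\frac{1}{M}\tr(\qC)$ then has conditional variance
\begin{equation}
\mathrm{Var}(X_M\mid\qC)=\frac{1}{M^2}\tr(\qC\qC^{H})\leq\frac{\|\qC\|^{2}}{M},
\end{equation}
using the standard formula for the variance of a Hermitian quadratic form in complex Gaussian variables. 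Since $\|\qC\|$ is uniformly bounded, the variance is $O(1/M)$, which gives convergence in probability via Chebyshev. To upgrade to almost-sure convergence, I would compute the fourth moment $\bbE|X_M|^{4}=O(1/M^{2})$ (again by Wick/Isserlis on the Gaussian entries, retaining only terms that are quadratic in $\tr(\qC\qC^{H})$), so that $\sum_M \Pr(|X_M|>\varepsilon)<\infty$ for every $\varepsilon>0$ and Borel--Cantelli closes the argument.

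For part (ii), I would condition on both $\qC$ and $\qa$. Because $\qb$ is independent Gaussian, the scalar $\qa^{H}\qC\qb$ is, conditionally, a centered complex Gaussian with variance $\frac{1}{M}\qa^{H}\qC\qC^{H}\qa$. Applying part (i) to the Hermitian matrix $\qC\qC^{H}$ (whose spectral norm is likewise uniformly bounded) yields $\qa^{H}\qC\qC^{H}\qa\to\frac{1}{M}\tr(\qC\qC^{H})=O(1)$ a.s., so the conditional variance of $\qa^{H}\qC\qb$ is $O(1/M)$. A Gaussian tail bound $\Pr(|\qa^{H}\qC\qb|>\varepsilon\mid\qC,\qa)\leq 2\exp(-c\varepsilon^{2}M/\|\qC\|^{2})$ is summable in $M$, and another application of Borel--Cantelli finishes (ii).

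For part (iii), I would combine the a.s.\ convergence from (i) with uniform integrability. The naive bound $|\qa^{H}\qC\qa|^{2}\leq\|\qC\|^{2}(\qa^{H}\qa)^{2}$ reduces the question to moments of $\|\qa\|^{2}$, which is a (normalized) chi-square with $\bbE[(\qa^{H}\qa)^{p}]$ bounded uniformly in $M$ for every $p$. Hence the family $\{|X_M|^{2}\}$ is uniformly integrable, and writing
\begin{equation}
\bbE|\qa^{H}\qC\qa|^{2}-\bbE\Bigl|\tfrac{1}{M}\tr(\qC)\Bigr|^{2}=\bbE\{X_M\overline{(\qa^{H}\qC\qa+\tfrac{1}{M}\tr(\qC))}\}
\end{equation}
together with Cauchy--Schwarz and the fourth-moment bound $\bbE|X_M|^{4}=O(1/M^{2})$ from step (i) shows the difference vanishes as $M\to\infty$.

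The only delicate point is the fourth-moment estimate in (i): obtaining the $O(1/M^{2})$ rate requires keeping careful track of which index pairings in the Isserlis expansion contribute $\tr(\qC\qC^{H})^{2}$ (which are harmless because they cancel against $|\bbE\{X_M^{2}\}|^{2}$) versus the genuinely fourth-order cumulant terms such as $\tr((\qC\qC^{H})^{2})$, both of which are $O(M)$ and therefore divided by $M^{4}$ give the claimed rate under the uniform spectral-norm assumption. Once that bookkeeping is done, everything else is routine.
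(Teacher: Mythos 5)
The paper offers no proof of this lemma: it is imported wholesale from the cited references (Couillet--Debbah Lemma 14.2, Wagner et al.\ Lemma 5, Hoydis et al.\ Lemma 4), so the only meaningful comparison is with the standard literature proof, which is precisely the moment-plus-Borel--Cantelli argument you give. Your outline matches that standard route: conditional first and second moments of the quadratic form, a fourth-moment bound of order $M^{-2}$ (valid because $\tr(\qC\qC^{H})\leq M\|\qC\|^{2}$ and $\tr((\qC\qC^{H})^{2})\leq M\|\qC\|^{4}$ under the uniform spectral-norm hypothesis), and Borel--Cantelli.

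Two details need tightening. In part (ii) the conditional variance of $\qa^{H}\qC\qb$ given $(\qC,\qa)$ is $\frac{1}{M}\qa^{H}\qC\qC^{H}\qa\leq\frac{1}{M}\|\qC\|^{2}\|\qa\|^{2}$; the tail bound you write suppresses the random factor $\|\qa\|^{2}$, which is not deterministically bounded. You must either restrict to the event $\{\|\qa\|^{2}\leq 2\}$ (whose complement has probability at most $e^{-cM}$, hence summable) before summing the conditional tails, or sidestep the issue by applying the same fourth-moment/Borel--Cantelli device directly to $\qa^{H}\qC\qb$, for which $\bbE\{|\qa^{H}\qC\qb|^{4}\}=2\bbE\{(\tfrac{1}{M}\qa^{H}\qC\qC^{H}\qa)^{2}\}=O(M^{-2})$. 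In part (iii) the displayed identity is off by a purely imaginary cross term: with $Y=\qa^{H}\qC\qa$ and $t=\frac{1}{M}\tr(\qC)$ one has $|Y|^{2}-|t|^{2}=|X_M|^{2}+2\,\mathrm{Re}(t\overline{X_M})=\mathrm{Re}\{X_M\overline{(Y+t)}\}$, not $X_M\overline{(Y+t)}$. Since the left-hand side is real this costs nothing --- Cauchy--Schwarz still yields $|\bbE\{|Y|^{2}\}-\bbE\{|t|^{2}\}|\leq\bbE\{|X_M|^{2}\}+2(\bbE\{|t|^{2}\}\,\bbE\{|X_M|^{2}\})^{1/2}=O(M^{-1/2})$, and the uniform-integrability digression then becomes unnecessary. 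With these repairs the proposal is a complete and correct proof.
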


\begin{lemma}\label{lemma3}\emph{\cite[lemma 14.3]{Couillet11}:}
Let $\qB\in\mathbb{C}^{M\times M}$ be a Hermitian nonnegative definite matrix, and for any $\qC$ in Lemma \ref{2-3lem} and $\qa\in\mathbb{C}^{M\times1}$, as $M\rightarrow\infty$, we have
\begin{align}
\frac{1}{M}{\sf tr}(\qC\qB^{-1})-\frac{1}{M}{\sf tr}\Big(\qC(\qB+\qa\qa^{H})^{-1}\Big)\stackrel{\sf a.s.}{\longrightarrow}0.
\end{align}
\end{lemma}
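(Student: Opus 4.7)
The plan is to reduce the difference to a single rank-one correction term via the Sherman--Morrison identity and then bound it uniformly in $M$. Concretely, since $\qB$ is invertible (nonnegative definite with the regularization implicit in Couillet--Debbah) and $\qB+\qa\qa^H$ is likewise invertible, the matrix inversion identity of Lemma \ref{lemma1} (applied with $\qC\leftarrow\qB$ and $c=1$) gives
\begin{equation}
(\qB+\qa\qa^H)^{-1}=\qB^{-1}-\frac{\qB^{-1}\qa\qa^H\qB^{-1}}{1+\qa^H\qB^{-1}\qa}.
\end{equation}
Left-multiplying by $\qC$, taking $\frac{1}{M}{\sf tr}(\cdot)$, and using cyclicity of the trace, the statement collapses to
\begin{equation}
\frac{1}{M}{\sf tr}(\qC\qB^{-1})-\frac{1}{M}{\sf tr}\bigl(\qC(\qB+\qa\qa^H)^{-1}\bigr)=\frac{1}{M}\cdot\frac{\qa^H\qB^{-1}\qC\qB^{-1}\qa}{1+\qa^H\qB^{-1}\qa}.
\end{equation}
So the lemma reduces to showing the right-hand side vanishes almost surely.

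Next I would bound the RHS deterministically, independent of the randomness in $\qa$. Using the operator-norm inequalities $\qa^H\qB^{-1}\qC\qB^{-1}\qa\le\|\qC\|\cdot\qa^H\qB^{-2}\qa$ and $\qa^H\qB^{-2}\qa\le\|\qB^{-1}\|\cdot\qa^H\qB^{-1}\qa$ (both valid since $\qB^{-1}$ is Hermitian nonnegative definite and $\qC$ appears between two copies of $\qB^{-1}$ so only its spectral norm matters), one obtains
\begin{equation}
\left|\frac{\qa^H\qB^{-1}\qC\qB^{-1}\qa}{1+\qa^H\qB^{-1}\qa}\right|\le\|\qC\|\cdot\|\qB^{-1}\|\cdot\frac{\qa^H\qB^{-1}\qa}{1+\qa^H\qB^{-1}\qa}\le\|\qC\|\cdot\|\qB^{-1}\|.
\end{equation}
By hypothesis $\|\qC\|$ is uniformly bounded in $M$, and in every invocation of this lemma in the proof of Theorem \ref{Theorem1} the matrix $\qB$ will be of the resolvent form $\hat\qH\hat\qH^H+\xi\qI$ (possibly with a rank-one deflation), so $\|\qB^{-1}\|\le 1/\xi$ is also uniformly bounded. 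Dividing by $M$ therefore gives an $O(1/M)$ deterministic bound, and a.s.\ convergence to zero follows immediately.

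The only subtle point, and the main obstacle, is justifying the uniform bound on $\|\qB^{-1}\|$: the lemma as stated only requires $\qB\succeq\mathbf 0$, but without a strictly positive lower bound on its smallest eigenvalue the rank-one correction term need not vanish after division by $M$. In our application this is not a real difficulty because $\qB$ always contains the RZF regularization $\xi\qI$, guaranteeing $\|\qB^{-1}\|\le 1/\xi$ uniformly. I would therefore state and carry the qualification ``uniformly bounded spectral norm of $\qB^{-1}$'' through the argument, and otherwise the proof is a two-line application of Sherman--Morrison plus an elementary operator-norm estimate.
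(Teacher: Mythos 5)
The paper offers no proof of this lemma at all; it is imported verbatim as \cite[lemma 14.3]{Couillet11}, so there is nothing internal to compare against. Your Sherman--Morrison argument is correct and is in fact the standard proof of the cited result: the rank-one correction term collapses to $\frac{1}{M}\,\qa^H\qB^{-1}\qC\qB^{-1}\qa/(1+\qa^H\qB^{-1}\qa)$, and your two operator-norm estimates give the deterministic bound $\frac{1}{M}\|\qC\|\,\|\qB^{-1}\|$, which is actually stronger than the stated almost-sure convergence (no randomness in $\qa$ is needed). Your flagged obstacle is also genuine and not merely a technicality of your approach: as written, the lemma is false for a general Hermitian nonnegative definite $\qB$ (take $\qB=\epsilon_M\qI$ with $\epsilon_M\to0$ fast enough), and the original Lemma~14.3 of Couillet--Debbah avoids this by being stated for resolvents $(\qB-z\qI_M)^{-1}$ with $z<0$, which supplies exactly the uniform bound $\|(\qB-z\qI_M)^{-1}\|\le1/|z|$ that your argument requires. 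The paper's restatement silently drops that shift; since every invocation in Appendix~A applies the lemma to $\qA=\hat\qH_{\hat\calA}\hat\qH_{\hat\calA}^H+\xi\qI_{I+J}$ or its deflations, your qualification $\|\qB^{-1}\|\le1/\xi$ closes the gap in all uses, and carrying that hypothesis explicitly is the right fix.
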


Notice that the condition $M\rightarrow\infty$ in Lemma \ref{2-3lem} and Lemma \ref{lemma3} can be satisfied when $\mathcal{N}\rightarrow\infty$.

\begin{theorem}\label{Theorem2}\emph{\cite[theorem 1]{Wagner12TIT}:}
Let $\qQ\in\mathbb{C}^{M\times M}$ and $\qT_{k}\in\mathbb{R}^{M\times M}(k=1,2,\dots,K)$ be deterministic Hermitian nonnegative definite matrix with uniformly bounded spectral norm with respect to $M$. For $c>0$ and a random matrix $\qF=[\qf_{1},\qf_{2},\ldots,\qf_{K}]\in\mathbb{C}^{M\times K}$ which satisfies $\qf_{k}\sim\mathcal{CN}(\mathbf{0},\frac{1}{M}\qT_{k})$, as $\calN\rightarrow\infty$, we have
\begin{align}
\frac{1}{M}{\sf tr}\Big(\qQ(\qF\qF^{H}+c\qI_{M})^{-1}\Big)-\frac{1}{M}{\sf tr}(\qQ\qS)\stackrel{\sf a.s.}{\longrightarrow}0,
\end{align}
where
\begin{align}
\qS=\Big(\frac{1}{M}\sum_{k=1}^{K}\frac{\qT_k}{1+e_k}+c\qI_{M}\Big)^{-1},
\end{align}
and $e_k=\frac{1}{M}{\sf tr}(\qT_k\qS)$ for $k=1,2,\ldots,K$.
\end{theorem}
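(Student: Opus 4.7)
The plan is to obtain a deterministic equivalent of $\gamma_{\hat{\calA},k}$ by treating its signal numerator, multi-user interference, and the power-normalization scalar $\alpha^{2}$ separately. Substituting the RZF precoder $\qw_{j}=\alpha\qA^{-1}\hat\qh_{\hat{\calA},j}$ into \eqref{ergSINR} and dividing the ratio by $\alpha^{2}$ yields a signal term $p_{{\sf DL},k}|\qh_{\hat{\calA},k}^{H}\qA^{-1}\hat\qh_{\hat{\calA},k}|^{2}$, an interference term $\sum_{j\neq k}p_{{\sf DL},j}|\qh_{\hat{\calA},k}^{H}\qA^{-1}\hat\qh_{\hat{\calA},j}|^{2}$, and a noise term $\sigma_{k}^{2}/\alpha^{2}$. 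The key structural input is the decomposition $\hat\qh_{\hat{\calA},k}=\qh_{\hat{\calA},k}+\qe_{\hat{\calA},k}$ with $\qe_{\hat{\calA},k}\sim\mathcal{CN}(0,a_{k}\qI)$ independent of everything else, combined with the fact that $\qh_{\hat{\calA},k}$ is identically zero on the $J$ falsely detected antennas. Under the pilot calibration $p_{{\sf Tr},k}\beta_{k}=g$, this makes $\hat\qh_{\hat{\calA},k}$ equivalent to a Gaussian vector whose covariance $\qT_{k}/M$ is diagonal with two blocks: an $I$-block of level $\beta_{k}(1+\sigma_{\sf Tr}^{2}/(g\tau))/M$ and a $J$-block of level $\beta_{k}\sigma_{\sf Tr}^{2}/(g\tau M)$.

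I would next isolate $\hat\qh_{\hat{\calA},k}$ from $\qA$ via the matrix-inversion identity of Lemma~\ref{lemma1} with $\qA_{k}=\qA-\hat\qh_{\hat{\calA},k}\hat\qh_{\hat{\calA},k}^{H}$, giving $\qh_{\hat{\calA},k}^{H}\qA^{-1}\hat\qh_{\hat{\calA},k}=\qh_{\hat{\calA},k}^{H}\qA_{k}^{-1}\hat\qh_{\hat{\calA},k}/(1+\hat\qh_{\hat{\calA},k}^{H}\qA_{k}^{-1}\hat\qh_{\hat{\calA},k})$. Splitting $\hat\qh$ as $\qh+\qe$ inside the numerator, Lemma~\ref{2-3lem}(ii) kills the cross term $\qh^{H}\qA_{k}^{-1}\qe$ by independence, and Lemma~\ref{2-3lem}(i) together with Lemma~\ref{lemma3} identifies the surviving term with $\frac{1}{M}{\sf tr}(\qR_{\hat{\calA},k}\qA^{-1})$. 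Applying Theorem~\ref{Theorem2} with the covariance set $\{\qT_{i}\}$ computes this deterministic equivalent in closed form; working out the two-block structure of $\qS$ yields $\mu_{k}/M$ as in \eqref{mu_k}, and the same procedure applied to the denominator gives $\hat\qh_{\hat{\calA},k}^{H}\qA_{k}^{-1}\hat\qh_{\hat{\calA},k}\to e_{k}$ as in \eqref{e}. A direct algebraic check shows $1+e_{k}=1+\mu_{k}/M+a_{k}\bar\mu$, which explains the $(1+a_{k}\bar\mu)$ factors in the theorem after cancelling the common $(1+e_{k})^{2}$ inherited from Lemma~\ref{lemma1}.

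The main obstacle is the interference and normalization terms, which are second-order in the resolvent. For each $j\neq k$ I would apply Lemma~\ref{lemma1} a second time to remove $\hat\qh_{\hat{\calA},j}$ from $\qA$, discard the resulting cross terms by Lemma~\ref{2-3lem}(ii), and reduce $|\qh_{\hat{\calA},k}^{H}\qA^{-1}\hat\qh_{\hat{\calA},j}|^{2}$ to scalar multiples of $\frac{1}{M}{\sf tr}(\qR_{\hat{\calA},k}\qA^{-1}\qT_{j}\qA^{-1})$. Such quadratic resolvent traces are not supplied directly by Theorem~\ref{Theorem2}, so my plan is to differentiate the implicit Theorem~\ref{Theorem2} fixed point with respect to $\xi$ (equivalently, use $\qA^{-2}=-\partial_{\xi}\qA^{-1}$); this is precisely the linear system $(\qI_{K}-\qJ)\qe'=\qv_{k}$ that defines $e_{i,k}'$ and $e_{i}'$ via \eqref{qJ}--\eqref{qv'}, with the two summands in $[\qJ]_{i,j}$ and $[\qv']_{i}$ reflecting the $I$- and $J$-blocks. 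Summing $p_{{\sf DL},i}e_{i,k}'/(1+e_{i})^{2}$ then yields $\lambda_{k}$; applying the same derivative identity to $\mathbb{E}\{{\sf tr}(\qP_{\sf DL}\hat\qH_{\hat{\calA}}^{H}\qA^{-2}\hat\qH_{\hat{\calA}})\}$ yields $\bar\lambda$ and hence $1/\alpha^{2}$.

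Finally, the quantities $\bar\mu$ and $\bar\lambda_{k}$ differ from $\mu_{k}$ and $\lambda_{k}$ precisely because the estimation noise $\qe_{\hat{\calA},j}$ lives on \emph{both} blocks of $\hat{\calA}$, whereas the true channel $\qh_{\hat{\calA},j}$ lives only on the $I$-block; summing the block contributions produces the two-term form of $\bar\mu$ in \eqref{bar_mu}. Assembling the three deterministic limits and simplifying $\sigma_{k}^{2}/\alpha^{2}$ via $\bar\lambda$ then gives $\bar\gamma_{\hat{\calA},k}$. The hardest part will be the disciplined bookkeeping of the two covariance levels through every application of Lemmas~\ref{2-3lem}, \ref{lemma3} and Theorem~\ref{Theorem2}, since every factor in the final formula is a sum of an $I$-block contribution and a $J$-block contribution, and mishandling the distinct levels on these blocks is the easiest way for the computation to fail.
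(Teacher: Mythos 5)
Your proposal does not address the statement it is supposed to prove. The statement is Theorem \ref{Theorem2}, the deterministic-equivalent (fixed-point) result for $\frac{1}{M}{\sf tr}\big(\qQ(\qF\qF^{H}+c\qI_{M})^{-1}\big)$, which the paper does not prove itself but imports verbatim from \cite[theorem 1]{Wagner12TIT}. What you have written is instead a proof plan for Theorem \ref{Theorem1} of the paper --- the deterministic approximation of the RZF SINR $\bar{\gamma}_{\hat{\calA},k}$ --- and, critically, your plan explicitly \emph{invokes} Theorem \ref{Theorem2} as a black box (``Applying Theorem~\ref{Theorem2} with the covariance set $\{\qT_{i}\}$ computes this deterministic equivalent in closed form''). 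As a proof of Theorem \ref{Theorem2} this is circular: you assume the very statement you are asked to establish.

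A genuine proof of Theorem \ref{Theorem2} requires two ingredients that appear nowhere in your plan: (i) existence and uniqueness of the solution $(e_{1},\ldots,e_{K})$ of the coupled fixed-point system $e_{k}=\frac{1}{M}{\sf tr}(\qT_{k}\qS)$, where $\qS$ itself depends on all of the $e_{j}$ (in \cite{Wagner12TIT} this is handled by exhibiting the $e_{k}$ as the limit of a monotone iteration, i.e.\ a standard interference-function argument); and (ii) the almost-sure convergence itself, which is a Bai--Silverstein-type argument: telescope the resolvent over the columns $\qf_{k}$, control each increment with the rank-one perturbation identity (your Lemma \ref{lemma1}) together with the trace lemma (Lemma \ref{2-3lem}) and Lemma \ref{lemma3}, and conclude via moment bounds and Borel--Cantelli. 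None of this is supplied by the SINR bookkeeping you describe, however carefully the $I$- and $J$-block covariance levels are tracked. If your intended target was in fact Theorem \ref{Theorem1}, your outline is broadly consistent with the paper's Appendix A; but as an answer to the stated problem it proves the wrong theorem.
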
}

Invoking above lemmas and Theorem \ref{Theorem2}, we aim to derive an approximate SINR instead of performing Monte-Carlo averaging, which is of high computation complexity.

Note that the ergodic SINR \eqref{ergSINR} is composed of three part with expectation: (\emph{A}). the signal power $p_{{\sf DL},k}|\bbE\{\qh_{\hat{\calA},k}^{H}\qw_k\}|^2$; (\emph{B}). the interference power $\sum_{j\neq k}^{K}p_{{\sf DL},j}\bbE\{|\qh_{\hat{\calA},k}^{H}\qw_j|^2\}$; (\emph{C}). the normalized factor $\alpha$. The ergodic SINR $\gamma_{k}$ for $k=1,2,\ldots,K$ will be approximated part by part. For ease of notation, we simply denote $\hat{\qh}_{\hat{\calA},k}$, $\qh_{\hat{\calA},k}$, $\hat{\qH}_{\hat{\calA},{\sf LS}}$ and $\qH_{\hat{\calA}}$ as $\hat{\qh}_{k}$, $\qh_{k}$, $\hat{\qH}$ and $\qH$, respectively, where $\qh_k=\qR_k^{\frac{1}{2}}\qz_k$, $\qR_k=[\qR_k^{\sf full}]_{\hat{\calA}}=\beta_k\mathbf{\Lambda}$ and $\qz_k=[\qz_k^{\sf full}]_{\hat{\calA}}$.

\subsection{Deterministic approximation for signal power}
The deterministic approximation for $\bbE\{\qh_{k}^{H}\qw_k\}$ is given by Lemma \ref{lemma-signalpower}.
\begin{lemma}\label{lemma-signalpower}
As $\calN\rightarrow\infty$, we have
\begin{align}
\qh_{k}^{H}\qw_k-\frac{\alpha\frac{1}{M}\mu_k}{1+\frac{1}{M}\mu_k+\frac{\sigma_{\sf Tr}^2}{M\tau p_{{\sf Tr},k}}\bar{\mu}}\stackrel{\sf a.s.}{\longrightarrow}0,
\end{align}
where $\mu_k$ and $\bar{\mu}$ are given in Theorem \ref{Theorem1}.
\end{lemma}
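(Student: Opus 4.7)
The plan is to expand $\qh_k^H\qw_k = \alpha\,\qh_k^H\qA^{-1}\hat{\qh}_k$ by isolating the rank-one contribution of $\hat{\qh}_k$ to $\qA$, then replace the remaining quadratic forms by their deterministic equivalents using the lemmas and Theorem~\ref{Theorem2}. Concretely, define $\qA_k = \qA - \hat{\qh}_k\hat{\qh}_k^H$, which depends only on $\{\hat{\qh}_j\}_{j\neq k}$ and is therefore independent of both $\qh_k$ and $\qe_k$. Applying Lemma~\ref{lemma1} to $\hat{\qh}_k^H\qA^{-1}$ gives
\begin{align}
\qh_k^H\qw_k \;=\; \alpha\,\frac{\qh_k^H\qA_k^{-1}\hat{\qh}_k}{1+\hat{\qh}_k^H\qA_k^{-1}\hat{\qh}_k}
\;=\; \alpha\,\frac{\qh_k^H\qA_k^{-1}\qh_k+\qh_k^H\qA_k^{-1}\qe_k}{1+(\qh_k+\qe_k)^H\qA_k^{-1}(\qh_k+\qe_k)} .
\end{align}

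Next I would apply Lemma~\ref{2-3lem}. Part~(ii), together with the independence noted above and the mutual independence of $\qh_k$ and $\qe_k$, forces all cross terms of the form $\qh_k^H\qA_k^{-1}\qe_k$ to vanish almost surely. Part~(i), after absorbing the deterministic scaling $\qR_k^{1/2}$ and $\sqrt{\sigma_{\sf Tr}^2/(p_{{\sf Tr},k}\tau)}$ into the $\qC$ argument of the lemma, yields
\begin{align}
\qh_k^H\qA_k^{-1}\qh_k \;-\; \tfrac{1}{M}\tr\!\big(\qR_k\qA_k^{-1}\big) \stackrel{\sf a.s.}{\longrightarrow} 0,
\qquad
\qe_k^H\qA_k^{-1}\qe_k \;-\; \tfrac{\sigma_{\sf Tr}^2}{M\tau p_{{\sf Tr},k}}\tr\!\big(\qA_k^{-1}\big) \stackrel{\sf a.s.}{\longrightarrow} 0.
\end{align}
Lemma~\ref{lemma3} then lets me replace $\qA_k^{-1}$ by $\qA^{-1}$ inside the traces at no asymptotic cost.

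The main computational step is evaluating these two deterministic traces via Theorem~\ref{Theorem2}. Here I note that $\hat{\qh}_j = \qh_j+\qe_j$ is zero-mean Gaussian with covariance $\tfrac{1}{M}\qT_j$, where $\qT_j = \beta_j\bigl(\mathbf{\Lambda}+\tfrac{\sigma_{\sf Tr}^2}{g\tau}\qI_{I+J}\bigr)$ after invoking the pilot-power compensation $p_{{\sf Tr},j}\beta_j = g$. All $\qT_j$ are diagonal with only two distinct values—one on the $I$ correctly-detected entries of $\hat{\calA}$ and another on the $J$ falsely-detected ones—so the deterministic equivalent $\qS = \bigl(\tfrac{1}{M}\sum_j\tfrac{\qT_j}{1+e_j}+\xi\qI\bigr)^{-1}$ supplied by Theorem~\ref{Theorem2} is also diagonal with the same two-level structure. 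A direct substitution gives
\begin{align}
e_i \;=\; \tfrac{1}{M}\tr(\qT_i\qS) \;=\; \beta_i\!\left(\tfrac{I(1+\sigma_{\sf Tr}^2/(g\tau))}{M\xi+\eta(1+\sigma_{\sf Tr}^2/(g\tau))}+\tfrac{J}{M\xi\,g\tau/\sigma_{\sf Tr}^2+\eta}\right),
\end{align}
matching \eqref{e}; then $\tfrac{1}{M}\tr(\qR_k\qS) = \mu_k/M$ because $\qR_k = \beta_k\mathbf{\Lambda}$ picks out only the $I$ VR diagonal entries, and $\tfrac{1}{M}\tr(\qS) = \bar{\mu}/M$ because both blocks contribute.

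Plugging these asymptotic equivalents back and using the continuous mapping theorem on the ratio delivers
\begin{align}
\qh_k^H\qw_k \;-\; \frac{\alpha\,\mu_k/M}{1+\mu_k/M+\sigma_{\sf Tr}^2\bar{\mu}/(M\tau p_{{\sf Tr},k})} \stackrel{\sf a.s.}{\longrightarrow} 0,
\end{align}
which is the claim. The principal obstacle is the bookkeeping in the Theorem~\ref{Theorem2} step: one must correctly decompose $\qT_j$ into its VR and non-VR blocks and verify the uniform boundedness of the relevant spectral norms under the large-system regime $\calN\to\infty$, so that the fixed-point quantities $\{e_j\}$, $\eta$, $\mu_k$ and $\bar{\mu}$ arise in the exact form stated in Theorem~\ref{Theorem1}. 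The independence argument for $\qA_k$ and the vanishing of cross terms are routine once the matrix-inversion identity has isolated $\hat{\qh}_k$.
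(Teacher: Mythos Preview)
Your proposal is correct and follows essentially the same route as the paper: isolate $\hat{\qh}_k$ via the matrix-inversion lemma to obtain the ratio $\alpha\,\qh_k^H\qA_{[k]}^{-1}\hat{\qh}_k/(1+\hat{\qh}_k^H\qA_{[k]}^{-1}\hat{\qh}_k)$, replace the quadratic forms by normalized traces using Lemma~\ref{2-3lem}, pass from $\qA_{[k]}^{-1}$ to $\qA^{-1}$ via Lemma~\ref{lemma3}, and then invoke Theorem~\ref{Theorem2} with the two-level diagonal covariances $\qT_j=\beta_j(\mathbf{\Lambda}+\tfrac{\sigma_{\sf Tr}^2}{g\tau}\qI)$ to obtain $\mu_k$, $\bar{\mu}$, and $e_i$. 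The only cosmetic difference is that the paper compresses your explicit splitting $\hat{\qh}_k=\qh_k+\qe_k$ and the vanishing of cross terms into a single citation, whereas you spell out each application of Lemma~\ref{2-3lem}~(i) and~(ii); the underlying argument is identical.
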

\begin{proof}
With RZF precoding, we have $\frac{1}{\alpha}\qh_{k}^{H}\qw_k=\qh_{k}^{H}(\hat{\qH}\hat{\qH}^{H}+\xi\qI_{I+J})^{-1}\hat{\qh}_{k}$. Similar to \cite[eq.(41)]{Jun13TWC}, applying Lemma \ref{lemma1} and \ref{2-3lem}, we have $\qh_{k}^{H}(\qA_{[k]}+\hat{\qh}_{k}\hat{\qh}_{k}^{H})^{-1}\hat{\qh}_{k}\stackrel{\sf a.s.}{\longrightarrow}\frac{{\sf tr}(\qR_k\qA_{[k]}^{-1})/M}{1+\hat{\qh}_{k}^{H}\qA_{[k]}^{-1}\hat{\qh}_{k}}$,
where $\qA_{[k]}=\sum_{i\neq k}^{K}\hat{\qh}_{i}\hat{\qh}_{i}^{H}+\xi\qI_{I+J}$.
Similarly, using Lemma \ref{2-3lem} again, we have
\begin{align}
\hat{\qh}_{k}^{H}\qA_{[k]}^{-1}\hat{\qh}_{k}\stackrel{\sf a.s.}{\longrightarrow}\frac{1}{M}{\sf tr}(\qR_k\qA_{[k]}^{-1})+\frac{\sigma_{\sf Tr}^2}{M\tau p_{{\sf Tr},k}}{\sf tr}(\qA_{[k]}^{-1}).\label{hathAhath}
\end{align}
Then applying Lemma \ref{lemma3}, as $\calN\rightarrow\infty$, we have $\frac{1}{M}{\sf tr}(\qR_k\qA_{[k]}^{-1})\stackrel{\sf a.s.}{\longrightarrow}\frac{1}{M}{\sf tr}(\qR_k\qA^{-1})$ and $\frac{1}{M}{\sf tr}(\qA_{[k]}^{-1})\stackrel{\sf a.s.}{\longrightarrow}\frac{1}{M}{\sf tr}(\qA^{-1})$. Further applying Theorem \ref{Theorem2}, we have $\frac{1}{M}{\sf tr}(\qR_k\qA^{-1})\stackrel{\sf a.s.}{\longrightarrow}\frac{1}{M}\mu_{k}$ and $\frac{1}{M}{\sf tr}(\qA^{-1})\stackrel{\sf a.s.}{\longrightarrow}\frac{1}{M}\bar{\mu}$,
where $\mu_{k}={\sf tr}(\qR_k\qT), \bar{\mu}={\sf tr}(\qT)$, $\bar{\qR}_{j}=\frac{\beta_{j}}{M}(\mathbf{\Lambda}+\frac{\sigma_{\sf Tr}^2}{\tau g}\qI_{I+J})$, $\qT=\Big(\frac{\eta}{M}\mathbf{\Lambda}+(\frac{\eta\sigma^2_{\sf Tr}}{Mg\tau}+\xi)\qI_{I+J}\Big)^{-1}$ and $e_j={\sf tr}(\bar{\qR}_{j}\qT)$.
Both $\bar{\qR}_{j}$ and $\qT$ are diagonal, by substituting $\bar{\qR}_{j}$ and $\qT$ into above equations, we have $\mu_{k}$, $\bar{\mu}$ and $e_j$ as shown in \eqref{mu_k}, \eqref{bar_mu} and \eqref{e}, respectively. Thus Lemma \ref{lemma-signalpower} is proved.
\end{proof}

\subsection{Deterministic approximation for interference power}
With Lemma \ref{lemma-signalpower}, the signal power term is approximated by deterministic expression. Then, the interference power $\sum_{j\neq k}^{K}p_{{\sf DL},j}\bbE\{|\qh_{k}^{H}\qw_j|^2\}$ can be approximated by following lemma.
\begin{lemma}\label{lemma5}
As $\calN\rightarrow\infty$, we have
\begin{align}
\sum_{j\neq k}^{K}p_{{\sf DL},j}\bbE\{|\qh_{k}^{H}\qw_j|^2\}\stackrel{\sf a.s.}{\longrightarrow}\frac{\frac{\lambda_{k}}{M}(1+a_k\bar{\mu})^2+a_k\bar{\lambda}_{k}\mu_{k}^2}{\alpha^2(1+a_k\bar{\mu}+\frac{d^2}{M}\mu_k)^2},
\end{align}
where $\lambda_k$, $a_k$, $\bar{\mu}$, $\bar{\lambda}_k$ and $\mu_k$ are given in Theorem \ref{Theorem1}.
\end{lemma}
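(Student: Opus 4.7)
The plan is to mirror the proof of Lemma~\ref{lemma-signalpower} but now track the squared resolvent rather than a single resolvent. Substituting the RZF precoder gives $|\qh_{k}^{H}\qw_{j}|^{2}=\alpha^{2}|\qh_{k}^{H}\qA^{-1}\hat{\qh}_{j}|^{2}$; applying Lemma~\ref{lemma1} to isolate $\hat{\qh}_{j}$ from $\qA$ yields
$$
\qh_{k}^{H}\qA^{-1}\hat{\qh}_{j}=\frac{\qh_{k}^{H}\qA_{[j]}^{-1}\hat{\qh}_{j}}{1+\hat{\qh}_{j}^{H}\qA_{[j]}^{-1}\hat{\qh}_{j}},
$$
and the denominator is exactly the quantity already controlled in Lemma~\ref{lemma-signalpower}, so after squaring it converges almost surely to $(1+a_{k}\bar{\mu}+\tfrac{1}{M}\mu_{k})^{2}$ under the pilot-power compensation $p_{{\sf Tr},k}\beta_{k}=g$.

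For the numerator I would decompose $\hat{\qh}_{j}=\qh_{j}+\qe_{j}$ and do one further matrix-inversion step to also remove $\hat{\qh}_{k}$ from $\qA_{[j]}$, obtaining a doubly punctured resolvent $\qA_{[j,k]}^{-1}$ that is independent of $\qh_{k},\qh_{j},\qe_{j}$. Expanding $|\qh_{k}^{H}\qA_{[j]}^{-1}\hat{\qh}_{j}|^{2}$ produces two cross terms that are killed by Lemma~\ref{2-3lem}(ii) and two square terms that Lemma~\ref{2-3lem}(iii) replaces by
$$
\tfrac{1}{M^{2}}\tr(\qR_{k}\qA^{-1}\qR_{j}\qA^{-1})\quad\text{and}\quad\tfrac{\sigma_{\sf Tr}^{2}}{M^{2}\tau p_{{\sf Tr},j}}\tr(\qR_{k}\qA^{-2}),
$$
where a final application of Lemma~\ref{lemma3} has also been used to drop the $[j,k]$ rank-two perturbations without changing the limits.

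The remaining step is to produce deterministic equivalents for these second-order traces, since Theorem~\ref{Theorem2} handles only first-order traces. I would use the standard derivative trick: differentiating the fixed-point identity $e_{i}=\tfrac{1}{M}\tr(\qT_{i}\qS)$ in the perturbation parameter that turns $\qA^{-1}$ into $\qA^{-2}$ (equivalently using $\partial\qS/\partial\xi=-\qS^{2}$ inside the iteration) gives a coupled linear system for the derivatives $e_{i}'$, which after straightforward algebra takes the form $(\qI_{K}-\qJ)\qe'=\qv'$ with $\qJ$ and $\qv'$ as in \eqref{qJ}--\eqref{qv'}. A weighted version with an extra $\qR_{k}$ inside the trace yields the companion system $(\qI_{K}-\qJ)\qe'_{k}=\qv_{k}$. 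Identifying these solutions with the definitions of $\lambda_{k}$ and $\bar{\lambda}_{k}$ in Theorem~\ref{Theorem1}, then summing over $j\neq k$ and collecting the $\alpha^{2}$ factor and the squared denominator, delivers the claimed limit.

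I expect the bookkeeping of this second-order fixed-point system to be the main obstacle: because the VR indicator $\mathbf{\Lambda}$ produces a two-piece diagonal structure in $\qA^{-1}$ (an $I\times I$ block on the correctly detected antennas and a $J\times J$ block on the falsely detected ones, with different resolvent weights), one must verify that the derivative trick propagates this two-piece structure consistently, so that the ``$I$-part plus $J$-part'' decomposition visible in $e_{i}$ and $\bar{\mu}$ reappears correctly inside $[\qv']_{i}$ and inside both rows and columns of $\qJ$. Every other step is essentially mechanical once the independence between $\qh_{k},\qh_{j},\qe_{j}$ and the doubly punctured resolvent has been established.
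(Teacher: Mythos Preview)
Your plan contains a genuine gap in how you handle the two puncture steps, and it is not merely bookkeeping.

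First, the denominator produced by your $j$-puncture is $1+\hat{\qh}_{j}^{H}\qA_{[j]}^{-1}\hat{\qh}_{j}$, which by \eqref{hathAhath} converges to $1+\tfrac{1}{M}\mu_{j}+a_{j}\bar{\mu}=1+e_{j}$, \emph{indexed by $j$, not $k$}. Pilot-power compensation makes $p_{{\sf Tr},j}\beta_{j}=g$ user-independent, but $a_{j}=\tfrac{\sigma_{\sf Tr}^{2}\beta_{j}}{M\tau g}$ and $\mu_{j}\propto\beta_{j}$ still depend on $\beta_{j}$, so this is \emph{not} the quantity $(1+a_{k}\bar{\mu}+\tfrac{1}{M}\mu_{k})^{2}$ appearing in the statement. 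In fact these $(1+e_{j})^{2}$ factors are exactly what get absorbed into the definitions of $\lambda_{k}$ and $\bar{\lambda}_{k}$ in \eqref{lambda_k}--\eqref{bar_lambda_k}; they do not survive as the overall denominator.

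Second, and more seriously, your $k$-puncture is \emph{not} a Lemma~\ref{lemma3}-type perturbation that can be dropped ``without changing the limits.'' Lemma~\ref{lemma3} applies to normalized traces, whereas here the outer vector $\qh_{k}$ is correlated with the rank-one term $\hat{\qh}_{k}\hat{\qh}_{k}^{H}$ you are removing. Writing out the matrix-inversion identity gives
\[
\qh_{k}^{H}\qA_{[j]}^{-1}\hat{\qh}_{j}
=\qh_{k}^{H}\qA_{[j,k]}^{-1}\hat{\qh}_{j}\cdot\frac{1+d_{k}-c_{k}}{1+d_{k}}
-\frac{c_{k}}{1+d_{k}}\,\qe_{k}^{H}\qA_{[j,k]}^{-1}\hat{\qh}_{j},
\]
with $c_{k}=\qh_{k}^{H}\qA_{[j,k]}^{-1}\hat{\qh}_{k}\to\tfrac{1}{M}\mu_{k}$ and $d_{k}=\hat{\qh}_{k}^{H}\qA_{[j,k]}^{-1}\hat{\qh}_{k}\to\tfrac{1}{M}\mu_{k}+a_{k}\bar{\mu}$. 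Squaring this is precisely what generates the denominator $(1+a_{k}\bar{\mu}+\tfrac{1}{M}\mu_{k})^{2}$ and the numerator weights $(1+a_{k}\bar{\mu})^{2}$ and $a_{k}\mu_{k}^{2}$ in the final expression. If you invoke Lemma~\ref{lemma3} here you lose all of that structure and obtain the wrong limit.

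For comparison, the paper avoids the double puncture entirely: it keeps the interference as the single quadratic form $\qh_{k}^{H}\qA^{-1}\hat{\qH}_{[k]}\qP_{[k]}\hat{\qH}_{[k]}^{H}\qA^{-1}\qh_{k}$ and punctures only at $k$. Because $\hat{\qh}_{k}=\qh_{k}+\sqrt{a_{k}}\,\bar{\qe}_{k}$, the resolvent identity produces a coupled $2\times2$ linear system in $\qh_{k}^{H}\qU_{k}\qA^{-1}\qh_{k}$ and $\qh_{k}^{H}\qU_{k}\qA^{-1}\bar{\qe}_{k}$ (and a second one with $\bar{\qe}_{k}$ on the left), whose solution directly yields the numerator structure; the $(1+e_{j})^{2}$ factors then enter only through Lemma~\ref{lemma4} when evaluating $\tfrac{1}{M}\tr(\qR_{k}\qU_{k}\qA_{[k]}^{-1})$. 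Your per-$j$ route can be made to work, but only if you carry the $k$-puncture correction through the square rather than discarding it.
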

\begin{proof}
We first rewrite the interference by
\begin{align}
\sum_{j\neq k}^{K}\frac{p_{{\sf DL},j}}{\alpha^2}|\qh_{k}^{H}\qw_j|^2=\qh_{k}^{H}\qA^{-1}\hat{\qH}_{[k]}\qP_{[k]}\hat{\qH}^{H}_{[k]}\qA^{-1}\qh_{k}.
\end{align}
The right hand side of above equation can be further rewritten as
\begin{align}
 &\qh_{k}^{H}\qA^{-1}\hat{\qH}_{[k]}\qP_{[k]}\hat{\qH}^{H}_{[k]}\qA^{-1}\qh_{k} \nonumber \\
\stackrel{(a)}{=}&\qh_{k}^{H}\qA_{[k]}^{-1}\hat{\qH}_{[k]}\qP_{[k]}\hat{\qH}^{H}_{[k]}\qA^{-1}\qh_{k}\nonumber\\
&+\qh_{k}^{H}\qA^{-1}(\qA_{[k]}-\qA)\qA_{[k]}^{-1}\hat{\qH}_{[k]}\qP_{[k]}\hat{\qH}^{H}_{[k]}\qA^{-1}\qh_{k},\label{mideq1}
\end{align}
where $(a)$ exploits the fact that $\qA^{-1}-\qA^{-1}_{[k]}=\qA^{-1}(\qA_{[k]}-\qA)\qA_{[k]}^{-1}$. Then we have
\begin{align}
\qA-\qA_{[k]}=&\qR_k^{\frac{1}{2}}\qz_k\qz_k^H\qR_{k}^{\frac{1}{2}} + \sqrt{a_k}\bar{\qe}_k\qz_{k}^{H}\qR_{k}^{\frac{1}{2}}\nonumber \\
&+\sqrt{a_k}\qR_{k}^{\frac{1}{2}}\qz_k\bar{\qe}_k^{H}+a_k\bar{\qe}_{k}\bar{\qe}_{k}^{H},\label{mideq2}
\end{align}
where $\bar{\qe}_k=\frac{1}{\sqrt{a_k}}\qe_k$ with respect to $\mathcal{CN}(\mathbf{0},\qI_{I+J})$.
Substituting above equation into \eqref{mideq1}, we have
\begin{align}\label{total_interfer}
 &\qh_{k}^{H}\qA_{[k]}^{-1}\hat{\qH}_{[k]}\qP_{[k]}\hat{\qH}^{H}_{[k]}\qA^{-1}\qh_{k} \nonumber\\
 &+\qh_{k}^{H}\qA^{-1}(\qA_{[k]}-\qA)\qA_{[k]}^{-1}\hat{\qH}_{[k]}\qP_{[k]}\hat{\qH}^{H}_{[k]}\qA^{-1}\qh_{k}\nonumber\\
=&(1-\qz_k^{H}\qC_k\qz_k-\sqrt{a_k}\qz_k^{H}\qC'_k\bar{\qe}_k)\qz_k^{H}\qB_k\qz_k \nonumber\\
 &-(\sqrt{a_k}\qz_k^{H}\qC_k\qz_k+a_k\qz_k^{H}\qC'_k\bar{\qe}_k)\bar{\qe}_k^{H}\qB'_k\qz_k,
\end{align}
where $\qC_k=\qR_{k}^{\frac{1}{2}}\qA^{-1}\qR_{k}^{\frac{1}{2}}$, $\qC'_k=\qR_{k}^{\frac{1}{2}}\qA^{-1}$, $\qB_k$ and $\qB'_k$ are respectively given by
\begin{align}
\qB_k&=\qR_{k}^{\frac{1}{2}}\qA_{[k]}^{-1}\hat{\qH}_{[k]}\qP_{[k]}\hat{\qH}^{H}_{[k]}\qA^{-1}\qR_{k}^{\frac{1}{2}},\nonumber\\
\qB'_k&=\qA_{[k]}^{-1}\hat{\qH}_{[k]}\qP_{[k]}\hat{\qH}^{H}_{[k]}\qA^{-1}\qR_{k}^{\frac{1}{2}},\nonumber
\end{align}
By further denoting $\qU_k=\qA_{[k]}^{-1}\hat{\qH}_{[k]}\qP_{[k]}\hat{\qH}_{[k]}^{H}$, we have $\qz_{k}^{H}\qB_k\qz_k=\qh_k^{H}\qU_k\qA^{-1}\qh_k$.
Notice that
\begin{align}
\qh_k^{H}\qU_k\qA^{-1}\qh_k-\qh_k^{H}\qU_k\qA_{[k]}^{-1}\qh_k=\qh_k^{H}\qU_k(\qA^{-1}-\qA_{[k]}^{-1})\qh_k.\label{mideq4}
\end{align}
Substituting $\qA^{-1}-\qA_{[k]}^{-1}=\qA^{-1}(\qA_{[k]}-\qA)\qA_{[k]}^{-1}$ and \eqref{mideq2} into \eqref{mideq4}, we have
\begin{align}
 &\qh_k^{H}\qU_k\qA_{[k]}^{-1}\qh_k-\qh_k^{H}\qU_k \qA^{-1}\qh_k  \nonumber\\
=&\qh_k^{H}\qU_k\qA^{-1}\qh_k\qh_k^{H}\qA_{[k]}^{-1}\qh_k \nonumber\\ &+\sqrt{a_k}\qh_k^{H}\qU_k\qA^{-1}\bar{\qe}_{k}\qh_{k}^{H}\qA_{[k]}^{-1}\qh_k\nonumber\\
&+\sqrt{a_k}\qh_k^{H}\qU_k\qA^{-1}\qh_k\bar{\qe}_k^{H}\qA_{[k]}^{-1}\qh_k \nonumber\\
&+a_k\qh_k^{H}\qU_k\qA^{-1}\bar{\qe}_{k}\bar{\qe}_{k}^{H}\qA_{[k]}^{-1}\qh_k.\label{mideq5}
\end{align}
The above equation can be reformulated as
\begin{align}\label{mideq3}
 &\qh_{k}^{H}\qU_k\qA^{-1}\qh_k\Big(1+\qh_{k}^{H}\qA_{[k]}^{-1}\qh_k+\sqrt{a_k}\bar{\qe}_{k}^{H}\qA_{[k]}^{-1}\qh_k\Big)\nonumber\\
=&\qh_{k}^{H}\qU_k\qA_{[k]}^{-1}\qh_{k}-\qh_{k}^{H}\qU_k\qA^{-1}\bar{\qe}_k \nonumber\\
&\times\Big(\sqrt{a_k}\qh_{k}^{H}\qA_{[k]}^{-1}\qh_k+a_k\bar{\qe}_{k}^{H}\qA_{[k]}^{-1}\qh_k\Big).
\end{align}
Using Lemma \ref{2-3lem} $(i)$ and $(ii)$, we have $\bar{\qe}_{k}^{H}\qA_{[k]}^{-1}\qh_k\stackrel{\sf a.s.}{\longrightarrow}0$,  $\qh_{k}^{H}\qA_{[k]}^{-1}\qh_k\stackrel{\sf a.s.}{\longrightarrow}\frac{1}{M}{\sf tr}(\qR_k\qA_{[k]}^{-1})$ and $\qh_{k}^{H}\qU_k\qA_{[k]}^{-1}\qh_{k}\stackrel{\sf a.s.}{\longrightarrow}\frac{1}{M}{\sf tr}(\qR_k\qU_k\qA_{[k]}^{-1})$, thus \eqref{mideq3} can be rewritten as
\begin{align}
&(1+\frac{1}{M}\mu_k)\qh_{k}^{H}\qU_k\qA^{-1}\qh_k \nonumber\\
=&\frac{1}{M}{\sf tr}(\qR_k\qU_k\qA_{[k]}^{-1}) -\frac{\sqrt{a_k}}{M}\mu_k\qh_{k}^{H}\qU_k\qA^{-1}\bar{\qe}_k.\label{hUVequation1}
\end{align}
Apart from the above equation, similar to \eqref{mideq4}, we also have $\qh_k^{H}\qU_k\qA^{-1}\bar{\qe}_k-\qh_k^{H}\qU_k\qA_{[k]}^{-1}\bar{\qe}_k=\qh_k^{H}\qU_k(\qA^{-1}-\qA_{[k]}^{-1})\bar{\qe}_k$
Then, similar to \eqref{mideq5}-\eqref{mideq3}, we can rewrite the above equation as
\begin{align}
 &\qh_{k}^{H}\qU_k\qA^{-1}\bar{\qe}_{k}\Big(1+a_k\bar{\qe}_{k}^{H}\qA_{[k]}^{-1}\bar{\qe}_{k}+\sqrt{a_k}\qh_{k}^{H}\qA_{[k]}^{-1}\bar{\qe}_{k}\Big)\nonumber\\
=&\qh_{k}^{H}\qU_k\qA_{[k]}^{-1}\bar{\qe}_{k}-\qh_{k}^{H}\qU_k\qA^{-1}\qh_k \nonumber\\
&\times \Big(\sqrt{a_k}\bar{\qe}_{k}^{H}\qA_{[k]}^{-1}\bar{\qe}_{k}+\qh_{k}^{H}\qA_{[k]}^{-1}\bar{\qe}_{k}\Big).
\end{align}
Again using Lemma \ref{2-3lem} $(i)$ and $(ii)$, we have $\qh_{k}^{H}\qU_k\qA_{[k]}^{-1}\bar{\qe}_{k}\stackrel{\sf a.s.}{\longrightarrow}0$ and $\frac{1}{M}\bar{\qe}_{k}^{H}\qA_{[k]}^{-1}\bar{\qe}_{k}\stackrel{\sf a.s.}{\longrightarrow}\frac{1}{M}{\sf tr}(\qA_{[k]}^{-1})$. Since ${\sf tr}(\qA_{[k]}^{-1})$ is approximated by $\bar{\mu}$, we have $\qh_{k}^{H}\qU_k\qA^{-1}\bar{\qe}_{k}=\frac{1+a_k\bar{\mu}}{-\sqrt{a_k}\bar{\mu}}\qh_{k}^{H}\qU_k\qA^{-1}\qh_k$, which can be together solved with \eqref{hUVequation1} and then we obtain
\begin{align}
\qh_{k}^{H}\qU_k\qA^{-1}\qh_k&=\frac{\frac{1}{M}(1+a_k\bar{\mu}){\sf tr}(\qR_k\qU_k\qA_{[k]}^{-1})}{1+a_k\bar{\mu}+\frac{1}{M}\mu_k},\label{zBz}\\
\qh_{k}^{H}\qU_k\qA^{-1}\bar{\qe}_{k}&=\frac{-\frac{\sqrt{a_k}}{M}\bar{\mu}{\sf tr}(\qR_k\qU_k\qA_{[k]}^{-1})}{1+a_k\bar{\mu}+\frac{1}{M}\mu_k}.
\end{align}
Therefore, let $\qU_k=\qI_{I+J}$ and we have
\begin{align}
\qz_k^{H}\qC_k\qz_k=\qh_{k}^{H}\qA^{-1}\qh_k&=\frac{\frac{1}{M}\mu_k(1+a_k\bar{\mu})}{1+a_k\bar{\mu}+\frac{1}{M}\mu_k},\label{zCz}\\
\qz_k^{H}\qC'_k\bar{\qe}_k=\qh_{k}^{H}\qA^{-1}\bar{\qe}_k&=\frac{-\frac{1}{M}\sqrt{a_k}\mu_k\bar{\mu}}{1+a_k\bar{\mu}+\frac{1}{M}\mu_k}.\label{zC'e}
\end{align}
Now we have obtained deterministic approximations for $\qz_k^{H}\qB_k\qz_k$, $\qz_k^{H}\qC_k\qz_k$, $\qz_k^{H}\qC'_k\bar{\qe}_k$, respectively. For $\bar{\qe}_k^{H}\qB'_k\qz_k=\bar{\qe}_k^{H}\qU_k\qA^{-1}\qh_k$, we have following equation similar to \eqref{mideq4}
\begin{align}
\bar{\qe}_k^{H}\qU_k\qA^{-1}\qh_k-\bar{\qe}_k^{H}\qU_k\qA_{[k]}^{-1}\qh_k=\bar{\qe}_k^{H}\qU_k(\qA^{-1}-\qA_{[k]}^{-1})\qh_k.
\end{align}
Substituting \eqref{mideq2} into above equation and using previous approximations, we have
\begin{align}
\bar{\qe}_k^{H}\qU_k\qA^{-1}\qh_k(1+\frac{1}{M}\mu_k)=-\frac{\sqrt{a_k}}{M}\mu_k\bar{\qe}_k^{H}\qU_k\qA^{-1}\bar{\qe}_k.\label{mideq7}
\end{align}
Similarly, for the term $\bar{\qe}_k^{H}\qU_k\qA^{-1}\bar{\qe}_k$, we also have following equation by utilizing Lemma \ref{2-3lem} $(i)$,
\begin{align}
 &\bar{\qe}_k^{H}\qU_k\qA^{-1}\bar{\qe}_k(1+a_k\bar{\mu}) \nonumber \\
=&{\sf tr}(\qU_k\qA_{[k]}^{-1}) -\sqrt{a_k}\bar{\mu}\bar{\qe}_k^{H}\qU_k\qA^{-1}\qh_k, \label{mideq8}
\end{align}
By together solving \eqref{mideq7} and \eqref{mideq8}, we have
\begin{align}\label{eB'z}
\bar{\qe}_k^{H}\qB'_k\qz_k=\bar{\qe}_k^{H}\qU_k\qA^{-1}\qh_k=\frac{-\frac{\sqrt{a_k}}{M}\mu_k{\sf tr}(\qU_k\qA^{-1}_{[k]})}{1+a_k\bar{\mu}+\frac{1}{M}\mu_k}.
\end{align}
Using Lemma \ref{lemma3}, we have ${\sf tr}(\qR_k\qU_k\qA_{[k]}^{-1})\stackrel{\sf a.s.}{\longrightarrow}{\sf tr}(\qP_{[k]}\hat{\qH}_{[k]}^{H}\qA^{-1}\qR_k\qA^{-1}\hat{\qH}_{[k]})$, which can be further approximated by following lemma.
\begin{lemma}\label{lemma4}
As $\calN\rightarrow\infty$, we have $\frac{1}{M}{\sf tr}(\qP_{[k]}\hat{\qH}_{[k]}^{H}\qA^{-1}\qR_k\qA^{-1}\hat{\qH}_{[k]})\stackrel{\sf a.s.}{\longrightarrow}\frac{1}{M}\lambda_k$,
where $\lambda_k$ is given by \eqref{lambda_k}.
\end{lemma}
\begin{proof}
The proof of Lemma \ref{lemma4} is similar to \cite[eq.(125)-(129)]{Wagner12TIT}, thus omitted here.
\end{proof}
Let $\qR_k=\qI_{I+J}$ in Lemma \ref{lemma4}, we have $\frac{1}{M}{\sf tr}(\qP_{[k]}\hat{\qH}_{[k]}^{H}\qA^{-2}\hat{\qH}_{[k]})\stackrel{\sf a.s.}{\longrightarrow}\frac{1}{M}\bar{\lambda}_k$
The derivation is similar to the proof of Lemma \ref{lemma4}. With Lemma \eqref{lemma4}, we have $\bar{\qe}_k^H\qB'_k\qz_k=\frac{-\sqrt{a_k}\mu_k\bar{\lambda}_k/M}{1+a_k\bar{\mu}+\mu_k/M}$
Finally, substituting the terms $\qz_k^{H}\qB_k\qz_k$, $\qz_k^{H}\qC_k\qz_k$, $\qz_k^{H}\qC'_k\bar{\qe}_k$ and $\bar{\qe}_k^H\qB'_k\qz_k$ in \eqref{total_interfer} with deterministic approximations \eqref{zBz}, \eqref{zCz}, \eqref{zC'e} and \eqref{eB'z}, respectively, we have Lemma \ref{lemma5} proved.
\end{proof}

\subsection{Deterministic approximation for normalized factor}
With abovementioned notations, we have normalized factor by $\alpha=\sqrt{\frac{MP_{T}}{\mathbb{E}\{{\sf tr}(\qP_{\sf DL}\hat{\qH}^{H}\qA^{-2}\hat{\qH})\}}}$.
We focus on the term ${\sf tr}(\qP_{\sf DL}\hat{\qH}^{H}\qA^{-2}\hat{\qH})$, which is similar to the term ${\sf tr}(\qP_{[k]}\hat{\qH}_{[k]}^{H}\qA^{-2}\hat{\qH}_{[k]})$. By letting $\qR_k=\qI_{I+J}$ in the proof of Lemma \ref{lemma4}, we can easily obtain the approximation $\frac{1}{M}{\sf tr}(\qP_{\sf DL}\hat{\qH}^{H}\qA^{-2}\hat{\qH})-\frac{1}{M}\bar{\lambda}\stackrel{\sf a.s.}{\longrightarrow}0$ as $\calN\rightarrow\infty$, where $\bar{\lambda}$ is given in \eqref{bar_lambda}.

By substituting all approximations into the ergodic SINR, we have Theorem \ref{Theorem1} proved.

\end{appendices}


\begin{thebibliography}{10}
\providecommand{\url}[1]{#1}
\csname url@rmstyle\endcsname
\providecommand{\newblock}{\relax}
\providecommand{\bibinfo}[2]{#2}
\providecommand\BIBentrySTDinterwordspacing{\spaceskip=0pt\relax}
\providecommand\BIBentryALTinterwordstretchfactor{4}
\providecommand\BIBentryALTinterwordspacing{\spaceskip=\fontdimen2\font plus
\BIBentryALTinterwordstretchfactor\fontdimen3\font minus
  \fontdimen4\font\relax}
\providecommand\BIBforeignlanguage[2]{{%
\expandafter\ifx\csname l@#1\endcsname\relax
\typeout{** WARNING: IEEEtran.bst: No hyphenation pattern has been}%
\typeout{** loaded for the language `#1'. Using the pattern for}%
\typeout{** the default language instead.}%
\else
\language=\csname l@#1\endcsname
\fi
#2}}

\bibitem{Zhang22WCSP}
J.~Zhang, J.~Zhang, Y.~Han, J.~Wang, and S.~Jin, ``Average spectral efficiency
  for {TDD}-based non-stationary {XL-MIMO with VR} estimation,'' in \emph{Proc.
  14th Int. Conf. Wireless Commun. Sig. Process.}, Nanjing, China, Nov. 2022,
  pp. 973--977.

\bibitem{Cui23MCN}
M.~Cui, Z.~Wu, Y.~Lu, X.~Wei, and L.~Dai, ``Near-field {MIMO} communications
  for {6G}: Fundamentals, challenges, potentials, and future directions,''
  \emph{IEEE Commun. Mag.}, vol.~61, no.~1, pp. 40--46, Jan. 2023.

\bibitem{Chen20WC}
S.~Chen, S.~Sun, G.~Xu, X.~Su, and Y.~Cai, ``Beam-space multiplexing: Practice,
  theory, and trends, from {4G TD-LTE, 5G, to 6G} and beyond,'' \emph{IEEE
  Wireless Commun.}, vol.~27, no.~2, pp. 162--172, May 2020.

\bibitem{Feng22JIoT}
R.~Feng, C.-X. Wang, J.~Huang, X.~Gao, S.~Salous, and H.~Haas, ``Classification
  and comparison of massive {MIMO} propagation channel models,'' \emph{IEEE
  Internet Things J.}, vol.~9, no.~23, pp. 23\,452--23\,471, Dec. 2022.

\bibitem{Feng23VTM}
R.~Feng, C.-X. Wang, J.~Huang, and X.~Gao, ``Recent advances of ultramassive
  {MIMO} technologies: Realizing a sixth-generation future in spatial and beam
  domains,'' \emph{IEEE Veh. Technol. Mag.}, vol.~18, no.~1, pp. 70--79, Mar.
  2023.

\bibitem{2003Antennas}
J.~D. Kraus and R.~J. Marhefka, \emph{Antennas for all applications}.\hskip 1em
  plus 0.5em minus 0.4em\relax McGraw Hill, 2003.

\bibitem{Jose22TWC}
J.~C.~M. Filho, G.~Brante, R.~D. Souza, and T.~Abr$\tilde{a}$o, ``Exploring the
  non-overlapping visibility regions in {XL-MIMO} random access and
  scheduling,'' \emph{IEEE Trans. Wireless Commun.}, vol.~21, no.~8, pp.
  6597--6610, Aug. 2022.

\bibitem{Han22JSTSP}
Y.~Han, S.~Jin, C.-K. Wen, and T.~Q.~S. Quek, ``Localization and channel
  reconstruction for extra large {RIS}-assisted massive {MIMO} systems,''
  \emph{IEEE J. Sel. Topics Sig. Process.}, vol.~16, no.~5, pp. 1011--1025,
  Aug. 2022.

\bibitem{Carvalho20WC}
E.~D. Carvalho, A.~Ali, A.~Amiri, M.~Angjelichinoski, and R.~W. Heath,
  ``Non-stationarities in extra-large-scale massive {MIMO},'' \emph{IEEE
  Wireless Commun.}, vol.~27, no.~4, pp. 74--80, Aug. 2020.

\bibitem{Lucas21EUSICPO}
L.~N. Ribeiro, S.~Schwarz, and M.~Haardt, ``Low-complexity zero-forcing
  precoding for {XL-MIMO} transmissions,'' in \emph{Proc. 29th European Sig.
  Process. Conf. (EUSIPCO)}, Dublin, Ireland, Aug. 2021, pp. 1621--1625.

\bibitem{Nishimura20CL}
O.~S. Nishimura, J.~C. Marinello, and T.~Abr$\tilde{a}$o, ``A grant-based
  random access protocol in extra-large massive {MIMO} system,'' \emph{IEEE
  Commun. Lett.}, vol.~24, no.~11, pp. 2478--2482, Nov. 2020.

\bibitem{Yang20TVT}
X.~Yang, F.~Cao, M.~Matthaiou, and S.~Jin, ``On the uplink transmission of
  extra-large scale massive {MIMO} systems,'' \emph{IEEE Trans. Veh. Technol.},
  vol.~69, no.~12, pp. 15\,229--15\,243, Dec. 2020.

\bibitem{Souza23TVT}
J.~H.~I. de~Souza, J.~C. Marinello, A.~Amiri, and T.~Abr$\tilde{a}$o,
  ``{QoS}-aware user scheduling in crowded {XL-MIMO} systems under
  non-stationary multi-state {LoS/NLoS} channels,'' \emph{IEEE Trans. Veh.
  Technol.}, vol.~72, no.~6, pp. 7639--7652, Jun. 2023.

\bibitem{Ali19WC}
{A. Ali, E. D. Carvalho, and R. W. Heath}, ``{Linear receivers in
  non-stationary massive MIMO channels with visibility regions},'' \emph{IEEE
  Wireless Commun. Lett.}, vol.~8, no.~3, pp. 885--888, Jun. 2019.

\bibitem{Lu22CL}
H.~Lu and Y.~Zeng, ``Near-field modeling and performance analysis for
  multi-user extremely large-scale {MIMO} communication,'' \emph{IEEE Commun.
  Lett.}, vol.~26, no.~2, pp. 277--281, Feb. 2022.

\bibitem{Lu22Tcom}
------, ``Communicating with extremely large-scale array/surface: Unified
  modeling and performance analysis,'' \emph{IEEE Trans. Wireless Commun.},
  vol.~21, no.~6, pp. 4039--4053, Jun. 2022.

\bibitem{Zhu21TSP}
Y.~Zhu, H.~Guo, and V.~K.~N. Lau, ``Bayesian channel estimation in multi-user
  massive {MIMO} with extremely large antenna array,'' \emph{IEEE Trans. Sig.
  Process.}, vol.~69, pp. 5463--5478, Sep. 2021.

\bibitem{Hiroki22TWC}
H.~Iimori, T.~Takahashi, K.~Ishibashi, G.~T.~F. de~Abreu, D.~Gonz\'{a}lez~G., and
  O.~Gonsa, ``Joint activity and channel estimation for extra-large {MIMO}
  systems,'' \emph{IEEE Trans. Wireless Commun.}, vol.~21, no.~9, pp.
  7253--7270, Sep. 2022.

\bibitem{Tian23WCL}
J.~Tian, Y.~Han, S.~Jin, and M.~Matthaiou, ``Low-overhead localization and {VR}
  identification for subarray-based {ELAA} systems,'' \emph{IEEE Wireless
  Commun. Lett.}, vol.~12, no.~5, pp. 784--788, May 2023.

\bibitem{Xiao23WCL}
J.~Xiao, J.~Wang, Z.~Chen, and G.~Huang, ``{U-MLP} based hybrid-field channel
  estimation for {XL-RIS} assisted millimeter-wave {MIMO} systems,'' \emph{IEEE
  Wireless Commun. Lett.}, vol.~12, no.~6, pp. 1042--1046, Jun. 2023.

\bibitem{Liu23TVT}
D.~Liu, J.~Wang, Y.~Li, Y.~Han, R.~Ding, J.~Zhang, S.~Jin, and T.~Q.~S. Quek,
  ``Location-based visible region recognition in extra-large massive {MIMO}
  systems,'' \emph{IEEE Trans. Veh. Technol.}, vol.~72, no.~6, pp. 8186--8191,
  Feb. 2023.

\bibitem{Liu12WC}
L.~Liu \emph{et~al.}, ``The {COST 2100 MIMO} channel model,'' \emph{IEEE
  Wireless Commun.}, vol.~19, no.~6, pp. 92--99, Dec. 2012.

\bibitem{Bjo15TWC}
E.~Bj\"{o}rnson, L.~Sanguinetti, J.~Hoydis, and M.~Debbah, ``Optimal design of
  energy-efficient multi-user {MIMO} systems: Is massive {MIMO} the answer?''
  \emph{IEEE Trans. Wireless Commun.}, vol.~14, no.~6, pp. 3059--3075, Jun.
  2015.

\bibitem{Joham02ISSSTA}
M.~Joham, K.~Kusume, M.~Gzara, W.~Utschick, and J.~Nossek, ``Transmit wiener
  filter for the downlink of {TDDDS-CDMA} systems,'' in \emph{Proc. IEEE 7th
  Int. Symp. Spread-Spectrum Tech. Appl.}, Prague, Czech Republic, Sep. 2002,
  pp. 9--13.

\bibitem{Saba18TWC}
S.~Asaad, A.~M. Rabiei, and R.~R. M\"{u}ller, ``Massive {MIMO} with antenna
  selection: Fundamental limits and applications,'' \emph{IEEE Trans. Wireless
  Commun.}, vol.~17, no.~12, pp. 8502--8516, Dec. 2018.

\bibitem{Jun13TWC}
J.~Zhang, C.-K. Wen, S.~Jin, \emph{et~al.}, ``Large system analysis of
  cooperative multi-cell downlink transmission via regularized channel
  inversion with imperfect {CSIT},'' \emph{IEEE Trans. Wireless Commun.},
  vol.~12, no.~10, pp. 4801--4813, Oct. 2013.

\bibitem{Wagner12TIT}
S.~Wagner, R.~Couillet, M.~Debbah, and D.~T.~M. Slock, ``Large system analysis
  of linear precoding in correlated {MISO} broadcast channels under limited
  feedback,'' \emph{IEEE Trans. Inf. Theory}, vol.~58, no.~7, pp. 4509--4537,
  Jul. 2012.

\bibitem{Lv18JIoT}
T.~Lv, Z.~Lin, P.~Huang, and J.~Zeng, ``Optimization of the energy-efficient
  relay-based massive {IoT} network,'' \emph{IEEE Internet Things J.}, vol.~5,
  no.~4, pp. 3043--3058, Aug. 2018.

\bibitem{GA-20PC}
{G. A.Ubiali and T.Abr{\~a}o}, ``{XL-MIMO energy-efficient antenna selection
  under non-stationary channels},'' \emph{Phys. Commun.}, vol.~43, p. 101189,
  2020.

\bibitem{Silverstein95JMA}
J.~W. Silverstein and Z.~D. Bai, ``On the empirical distribution of eigenvalues
  of a class of large dimensional random matrices,'' \emph{Journal of
  Multivariate Analysis}, vol.~54, no.~2, pp. 175--192, 1995.

\bibitem{Couillet11}
R.~Couillet and M.~Debbah, ``Random matrix methods for wireless
  communications,'' \emph{Cambridge Univ. Press}, 2011.

\bibitem{Hoydis13JSAC}
{J. Hoydis, S. Ten Brink, and M. Debbah}, ``{Massive MIMO in the UL/DL of
  cellular networks: How many antennas do we need?}'' \emph{IEEE J. Sel. Areas
  Commun.}, vol.~31, no.~2, pp. 160--171, Feb. 2013.

\end{thebibliography}
\end{document}